\let\old@ssect\@ssect %
\newtheorem{remark}[theorem]{Remark}
\theoremstyle{acmdefinition}
\newtheorem{problem}[theorem]{Problem}
\newacronym[longplural=Interval Markov Decision Processes]{imdp}{IMDP}{Interval Markov Decision Process}
\newacronym[longplural=Markov Decision Processes]{mdp}{MDP}{Markov Decision Process}
\newacronym{imc}{IMC}{Interval Markov Chain}
\newacronym[longplural=orthogonally decoupled Interval Markov Decision Processes]{odimdp}{odIMDP}{orthogonally decoupled Interval Markov Decision Process}
\newacronym{odimc}{odIMC}{orthogonally decoupled Interval Markov Chain}
\newacronym[longplural=Robust Markov Decision Processes]{rmdp}{RMDP}{Robust Markov Decision Process}
\newacronym{nndm}{NNDM}{Neural Network Dynamic Model}
\newacronym[longplural=Gaussian Processes]{gp}{GP}{Gaussian Process}
\newacronym{wrt}{wrt.}{with respect to}
\newacronym{pdf}{PDF}{probability density function}
\newacronym{iid}{i.i.d.}{independent and identically distributed}
\newacronym{sbf}{SBF}{Stochastic Barrier Function}
\newacronym{mc}{MC}{Markov Chain}
\begin{document}

\title[Scalable control synthesis for stochastic systems via structural IMDP abstractions]{Scalable control synthesis for stochastic systems\\via structural IMDP abstractions}

\author{Frederik Baymler Mathiesen}
\email{f.b.mathiesen@tudelft.nl}
\affiliation{%
  \institution{Delft University of Technology}
  \city{Delft}
  \country{The Netherlands}
}

\author{Sofie Haesaert}
\email{s.haesaert@tue.nl}
\affiliation{%
  \institution{Eindhoven University of Technology}
  \city{Eindhoven}
  \country{The Netherlands}
}

\author{Luca Laurenti}
\authornote{L.L. is partially supported by the NWO (Grant OCENW.M.22.056).}
\email{l.laurenti@tudelft.nl}
\affiliation{%
  \institution{Delft University of Technology}
  \city{Delft}
  \country{The Netherlands}
}

\begin{CCSXML}
<ccs2012>
    <concept>
        <concept_id>10002950.10003648.10003700.10003701</concept_id>
        <concept_desc>Mathematics of computing~Markov processes</concept_desc>
        <concept_significance>500</concept_significance>
    </concept>
    <concept>
        <concept_id>10003752.10003809.10003716.10011138.10010046</concept_id>
        <concept_desc>Theory of computation~Stochastic control and optimization</concept_desc>
        <concept_significance>300</concept_significance>
    </concept>
    <concept>
        <concept_id>10010520.10010553.10010562</concept_id>
        <concept_desc>Computer systems organization~Embedded systems</concept_desc>
        <concept_significance>100</concept_significance>
    </concept>
</ccs2012>
\end{CCSXML}

\ccsdesc[500]{Mathematics of computing~Markov processes}
\ccsdesc[300]{Theory of computation~Stochastic control and optimization}
\ccsdesc[100]{Computer systems organization~Embedded systems}

\keywords{Markov Decision Processes, Robust Value Iteration, Reachability, Control Synthesis, Verification.}

\begin{abstract}                %
This paper introduces a novel abstraction-based framework for controller synthesis of nonlinear discrete-time stochastic systems. The focus is on probabilistic reach-avoid specifications. The framework is based on abstracting a stochastic system into a new class of robust Markov models, called \acrfullpl{odimdp}. Specifically, an \acrshortpl{odimdp} is a class of robust Markov processes, where the transition probabilities between each pair of states are uncertain and have the product form. We show that such a specific form in the transition probabilities allows one to build compositional abstractions of stochastic systems that, for each state, are only required to store the marginal probability bounds of the original system. This leads to improved memory complexity for our approach compared to commonly employed abstraction-based approaches. Furthermore, we show that an optimal control strategy for a \acrshortpl{odimdp} can be computed by solving a set of linear problems. When the resulting strategy is mapped back to the original system, it is guaranteed to lead to reduced conservatism compared to existing approaches. To test our theoretical framework, we perform an extensive empirical comparison of our methods against \acrlong{imdp}- and \acrlong{mdp}-based approaches on various benchmarks including 7D systems. Our empirical analysis shows that our approach substantially outperforms state-of-the-art approaches in terms of both memory requirements and the conservatism of the results. 
\end{abstract}

\maketitle

\section{Introduction}
\label{sec:introdution}
Modern cyber-physical systems, such as autonomous vehicles or robotic systems, are often modeled as stochastic dynamical systems \cite{carvalho2014stochastic, 6907293}, where nonlinear dynamics transform stochastic quantities over time. Consequently, to deploy cyber-physical systems in safety-critical applications, it has become of paramount importance to develop techniques for controller synthesis and verification of nonlinear stochastic dynamical systems with guarantees of performance and correctness \cite{10.1007/978-3-031-06773-0_1}. However, unfortunately, existing techniques still lack scalability and are commonly limited to low-dimensional systems \cite{lavaei2022automated,laurenti2023unifying}. 

Abstraction-based methods arguably represent the state-of-the-art for formal analysis of nonlinear stochastic systems \cite{lavaei2022automated,liu2014abstraction, van2023syscore, cauchi2019efficiency}. Abstraction-based methods approximate a given complex system (the \enquote{concrete} system) with a simpler one (the \enquote{abstraction}) for the purpose of performing analysis while maintaining a (bi)simulation relation between the concrete and abstract systems \cite{liu2014abstraction, VANHUIJGEVOORT2022110476, 9993724, cauchi2019efficiency}. Then, by relying on the simulation relation, controller strategies or verification results can be mapped from the abstraction to the concrete system with provable correctness guarantees. In the context of stochastic dynamical systems, the abstraction model is usually taken to be a variant of a Markov model with finite state space.
In particular, abstractions to robust Markov models and, especially, \glspl{imdp} \cite{givan2000bounded} have recently shown great potential to achieve state-of-the-art results \cite{cauchi2019efficiency, jiang2022safe, chatterjee2008model, badings2023robust}.
However, these approaches are still limited both in terms of memory and computational requirements \cite{cauchi2019efficiency, laurenti2023unifying, mathiesen2024intervalmdpjlacceleratedvalueiteration}. %

In this paper, we introduce a novel framework for formal analysis of stochastic dynamical systems based on abstraction to robust Markov models. Our framework uses compositional reasoning and relies on the structural properties of the system, namely independence of the stochasticity between dimensions, to improve the tightness and computational efficiency of abstraction-based methods that use robust finite-state Markov models. In particular, we introduce a new class of robust Markov models, called \glspl{odimdp}, where the transition probabilities between each pair of states lie within a product of intervals. We show that this form of transition probability is particularly suitable to abstract stochastic systems, where each interval in the product can encode the marginal probabilities of the system. This approach has two main advantages: (i) for each pair of state actions, \glspl{odimdp} are only required to store the marginal distributions of the system instead of the transition probabilities to all the other states in the abstraction. This leads to substantial improvements in the memory requirement of the method, which is arguably the main bottleneck of existing abstraction-based methods, (ii) by encoding constraints on the marginals, the resulting abstraction model exploits the structural properties of the concrete system, leading to improved tightness of the resulting error bounds. 
To guarantee computational efficiency of our framework, we propose a novel approach to synthesize robust strategies for an \gls{odimdp} against a probabilistic reach-avoid property based on a divide-and-conquer algorithm that recursively solves a linear programming problem and returns upper and lower bounds on the probability that the odIMDP satisfies the property. Finally, we show that the resulting strategy can be mapped back on the concrete system and that the returned bounds are guaranteed to contain the probability that the concrete system satisfies the property.

To empirically evaluate the framework, we tested it on benchmarks from the literature \cite{b8abfba04ce648ef8322e66658774906, van2023syscore, 10.1007/978-3-031-68416-6_15, reed2023promises}, including stochastic affine and nonlinear systems, neural network dynamic models, \glspl{gp}, and Gaussian mixture models.
We compared our framework with existing state-of-the-art abstraction-based tools for verification of stochastic systems based on abstractions to \glspl{mdp} and \glspl{imdp} \cite{van2023syscore, 10.1007/978-3-031-68416-6_15}.
As expected, the experiments show that our approach outperforms the state-of-the-art both in terms of memory requirement and conservativity of the results, allowing one to synthesize optimal strategies for systems that current methodologies cannot handle. For example, on both 6D and 7D systems, while existing approaches returned trivial results, our approach would successfully synthesize optimal policies with a guaranteed lower bound on the satisfaction probability $>0.95$.
In summary, the main contributions of the paper are:
\begin{itemize}
    \item We introduce a new class of robust Markov models called \acrfullpl{odimdp} and show how robust value iteration for \glspl{odimdp} can be performed by relying on linear programming.
    \item We show that a large class of stochastic dynamical systems can be successfully abstracted as \glspl{odimdp} and derive efficient and scalable methods to construct the abstraction. 
    \item We performed a large-scale empirical evaluation and compared our approach with state-of-the-art abstraction-based verification methods for stochastic systems.
\end{itemize}

\paragraph{Related Works}
Abstraction-based approaches for stochastic systems generally rely on abstracting the concrete system to either \glspl{mdp} or \glspl{imdp} \cite{lavaei2022automated}. 
\glspl{odimdp} shares many features with \glspl{imdp} \cite{givan2000bounded} and abstraction of stochastic systems to \glspl{imdp} has been extensively studied, including stochastic hybrid systems \cite{cauchi2019efficiency}, neural networks \cite{adams2022formal}, and \acrlongpl{gp} \cite{jackson2021formal, reed2023promises}, and using data-driven construction methods \cite{jackson2020towards, badings2023robust}. Furthermore, many tools support value iteration over \glspl{imdp}, including: PRISM \cite{kwiatkowska2011prism}, bmdp-tool \cite{morteza_lahijanian_bmdp_tool}, Storm \cite{Hensel2021}, IntervalMDP.jl \cite{mathiesen2024intervalmdpjlacceleratedvalueiteration}, and IMPaCT\footnote{IMPaCT includes functionality for the abstraction of stochastic systems to \glspl{imdp}.} \cite{10.1007/978-3-031-68416-6_15}. However, a common limitation of these approaches is in the memory requirements in building the abstraction, which is generally quadratic in $|S|$, the size of the \gls{imdp}, which is, in turn, exponential in $n$, the dimension of the state space of the concrete system. In contrast, our approach based on \glspl{odimdp} guarantees an improved memory complexity (see Subsection \ref{subsec:space_complexity_analysis}) of a factor $\sqrt[n]{|S|}$. Furthermore, we also show how our approach generally returns tighter bounds compared to \gls{imdp}-based approaches. To the best of our knowledge, the only paper that has considered compositional reasoning for \glspl{imdp} \cite{10.1007/978-3-319-69483-2_2}, which, however, focuses on the parallel composition of given \glspl{imdp}, while in this paper, we focus on compositional construction of the abstraction. %
Approaches based on abstractions to (non-robust) \glspl{mdp} have also been widely employed in the literature \cite{lavaei2022automated,abate2010approximate,soudjani2014aggregation}. To scale to larger-dimensional systems, various works \cite{nilsson2018toward, van2023syscore, BOUTILIER200049, EsmaeilZadehSoudjani2016, LAVAEI2021100991} have devised frameworks that precede a gridding approach, such as the one introduced in our paper, with model order reductions or other modular approaches. However, in these approaches, gridding is still the limiting factor. Consequently, our approach is complementary to these order-reducing approaches and should lead, when combined with these methods, to abstractions of even higher-dimensional systems.

In parallel to abstraction-based methods, various works have focused on abstraction-free methods based on \glspl{sbf} \cite{4287147, SANTOYO2021109439}.
The idea is to construct a function that allows one to upper-bound the probability that the system exits a given safe set without directly studying the evolution of the system over time.
The synthesis of \glspl{sbf} has been formulated as sum-of-squares optimization \cite{SANTOYO2021109439, mazouz2022safety}, with neural networks \cite{9990576}, and as linear programming \cite{mazouz2024piecewisestochasticbarrierfunctions}.
Furthermore, data-driven synthesis techniques have also been developed \cite{SALAMATI20217, 10383306}.
However, compared to abstraction-based methods, \glspl{sbf} tend to be conservative \cite{laurenti2023unifying}.

\paragraph*{Structure of the paper}
In Section \ref{sec:problem_statement},  we formally state our problem of verifying probabilistic reach-avoid for stochastic systems and outline our approach.
Then in Section \ref{sec:interval_markov_decision_processes}, we briefly review \glspl{imdp} and how stochastic systems are abstracted to \glspl{imdp}. We also review weaknesses of this approach that leads us to introduce our new Markov model, \acrfullpl{odimdp}, which is introduced in Section \ref{sec:orthogonally_decoupled_interval_markov_decision_processes}. Section \ref{sec:orthogonally_decoupled_interval_markov_decision_processes} also discusses the space complexity required to store an \glspl{odimdp} and analyse their ambiguity set and compare them to those of state-of-the-art abstraction methods.
In Section \ref{sec:abstraction}, we show how a stochastic system can be successfully abstracted to \glspl{odimdp}. %
Then, in Section \ref{sec:value_iteration} we derive algorithms for robust value iteration of \glspl{odimdp} based on linear programming.
Finally, in Section \ref{sec:experiments}, we empirically benchmark the proposed methodology.

\paragraph*{Notation}
$\mathbb{R}$ and $\mathbb{N}$ represent the set of real and natural numbers with $\mathbb{N}_0 = \mathbb{N} \cup \{0\}$. The set $\mathbb{S}^n_{++}$ is the set of symmetric positive-definite matrices of size $n \in \mathbb{N}$. For a finite set $S$, $|S|$ is its cardinality. A probability distribution $\probdist$ over a finite set $S$ is a function $\probdist : S \to [0, 1]$ satisfying $\sum_{s \in S} \probdist(s) = 1$. $\dists(S)$ is the set of all probability distributions over $S$. 
For $\underline{p}, \overline{p} : S \to [0, 1]$ such that $\underline{p}(s) \leq \overline{p}(s)$ for each $s \in S$ and $\sum_{s \in S} \underline{p}(s) \leq 1 \leq \sum_{s \in S} \overline{p}(s)$, an interval ambiguity set $\ambiguityset \subset \dists(S)$ is the set of distributions such that $
    \ambiguityset = \{ \probdist \in \dists(S) \,:\, \underline{p}(s) \leq \probdist(s) \leq \overline{p}(s) \text{ for each } s\in S \}.
$
$\underline{p}, \overline{p}$ are often referred to as the interval bounds of the interval ambiguity set.
The set of all interval ambiguity sets over $S$ is denoted by $\intamb(S)$. For $n$ finite sets $S_1, \ldots, S_n$ we denote by $S_1 \times \cdots \times S_n$ their Cartesian product. Given $S=S_1 \times \cdots \times S_n$ and $n$ ambiguity sets $\ambiguityset_i \in \dists(S_i)$, $i = 1, \ldots, n$, the product ambiguity set $\ambiguityset \subseteq \dists(S)$ is defined as
 $   \ambiguityset = \left\{ \probdist \in \dists(S) \,:\, \probdist(s) = \prod_{i=1}^n \probdist^i(s^i), \, \probdist^i \in \ambiguityset_i \right\}$, 
where $s = (s_1, \ldots, s_n)\in S$. In what follows, with a slight abuse of notation, we denote the product ambiguity set as $\ambiguityset = \bigotimes_{i=1}^n \ambiguityset_i$. Each $\ambiguityset_i$ is called a marginal or component ambiguity set.
$\mathbf{1}_{X}(x)$ denotes the indicator function for the set $X$, that is, $\mathbf{1}_{X}(x) = 1$ if $x \in X$ and $0$ otherwise.

\section{Problem Statement}\label{sec:problem_statement}
Consider a stochastic process described by the following stochastic difference equation
\begin{equation}
    \label{eq:System}
    x_{k+1} = f(x_k, u_k, v_k)
\end{equation}
where $x_k \in \mathbb{R}^{n}$ and  $u_k \in U \subseteq \mathbb{R}^{m}$ are respectively state of the system and input at time $k$. We assume that $|U|$ is finite. $v_k \in V \subseteq \mathbb{R}^{n_v}$ is an \gls{iid} random variable for all time $k \in \mathbb{N}_{0}$ that represents the noise that affects the system at each time step. The vector field $f: \mathbb{R}^{n} \times U \times V \to \mathbb{R}^{n}$ represents the controlled stochastic dynamics of the system.

To define a probability measure for System~\eqref{eq:System}, we introduce the one-step stochastic kernel $T(X\mid x, u)$ that defines the probability that System~\eqref{eq:System} from state $x$ will transition to region $X\subseteq \mathbb{R}^n$ in one time step under action $u$. 
For this work, we assume that the transition kernel is a mixture of $K$ Gaussians for $K>0$, each with diagonal covariance. That is,
\begin{equation}
\label{eq:transition_kernel}
    T(X\mid x, u):= \sum_{r=1}^K \alpha_r(x, u) \int_{X} \mathcal{N}(x' \mid \mu^r(x,u), \Sigma^r(x, u)) dx',
\end{equation}
where $\mu^r(x, u) \in \mathbb{R}^{n}$ and $\Sigma^r(x, u) \in \mathbb{S}^{n}_{++}$ are action and state-dependent mean and covariance functions of the $r$-th Gaussian in the mixture. $\alpha_r(x, u)$ is a weighting function such that $\alpha_r(x, u) \geq 0$ and $\sum_{r=1}^K \alpha_r(x, u) = 1$ for each source-action pair $(x, u)$. We assume $\mu^r, \Sigma^r,$  and $\alpha_r$ are continuous in $x$ for each $u$.
When $K = 1$, we omit $r$ from the weighting probability function, mean, and covariance.

\begin{remark}
Note that, while the assumption of diagonal covariance for each distribution in  Eqn. \eqref{eq:transition_kernel} introduces independence of the stochasticity at different dimensions, the resulting model still encompasses a large class of models of practical interest. In fact, Eqn. \eqref{eq:transition_kernel} includes linear and nonlinear systems with additive Gaussian noise \cite{cauchi2019efficiency, adams2022formal, adams2024GMM} and systems learned via Gaussian process regression \cite{williams1995gaussian}. Furthermore, we should stress that mixtures of Gaussians with diagonal covariance can arbitrarily well approximate any distribution \cite[Proposition 4.1]{NESTORIDIS20111783}. %
\end{remark}

Given an initial condition $x_0 \in \mathbb{R}^{n}$ and a control policy\footnote{For the class of systems and problem considered in this paper time-dependent Markov policies suffices for optimality \cite{laurenti2023unifying}.} $\pi_x: \mathbb{R}^n \times \mathbb{N}_0 \to U$, the stochastic kernel $T$ induces a unique and well-defined probability measure $P^{x_0,\pi_x}$ \cite[Prop. 7.45]{bertsekas2004stochastic}, such that for sets
$X_0, X_{k + 1} \subseteq X$ it holds that 
\begin{align*}
    &P^{x_0, \pi_x}[x_0\in X_0] =  \mathbf{1}_{X_0}(x_0), \\
    &P^{x_0, \pi_x}[x_{k+1} \in X_{k + 1} \mid x_{k} = x, u_{k} = \pi_x(k, x)] = T(X_{k + 1} \mid x, \pi_x(x, k)).
\end{align*}
With the definition of $P^{x_0, \pi_x}$, we can now make probabilistic statements on the trajectories of System \eqref{eq:System}. 

\begin{definition}[Probabilistic reach-avoid]
    Consider a compact region of interest $X \subset \mathbb{R}^n$.
    Let $R \subset X$ and $O \subset X$ be closed sets with $R \cap O = \emptyset$, $x_0 \in X$ be the initial state, and $H \in \mathbb{N}$ be the time horizon. Then, for a given control strategy $\pi_x$, \emph{probabilistic reach-avoid} is defined as
    \[
        \begin{aligned}
            P_{\mathrm{ra}}(R, O, x_0, \pi_x, H) = P^{x_0, \pi_x}[&\exists k \in \{0,\ldots, H\}, x_k \in R,\\
            &\forall k' \in \{0, \ldots, k\}, x_{k'} \notin O \wedge x_{k'}\in X].
        \end{aligned}
    \]
\end{definition}
Note that the assumption of a given region of interest is standard \cite{van2023syscore, cauchi2019efficiency} and allows one to only focus on the behavior of the system in a bounded set. 
The problem we consider in this work is then formally defined as follows.
\begin{problem}[Verification and controller synthesis]
\label{prob:Synthesis}
Consider a compact region of interest $X \subset \mathbb{R}^n$. Then, for sets $R, O \subset X$, initial state $x_0 \in X$, and time horizon $H$, synthesize a control strategy $\pi^*_x$ such that $\pi^* \in \arg\max_{\pi_x} P_{\mathrm{ra}}(R, O, x_0, \pi_x, H).$
\end{problem}
Problem~\ref{prob:Synthesis} requires one to compute the strategy $\pi_x$ that maximizes the probability that a trajectory of System~\eqref{eq:System} starting from $x_0$ enters $R$ within $H$ time steps and avoids $O$ until $R$ is reached. 
Maximizing the probability is without loss of generality as, as we will show in Section \ref{sec:value_iteration}, the case $\min$ follows similarly.
Note that focusing on probabilistic reach-avoid in Problem \ref{prob:Synthesis} is not limiting. In fact, certifying reach-avoid properties enables verification of more complex temporal logic specifications \cite[Alg. 11]{Baier2008}.

\paragraph*{Approach}
To solve Problem \ref{prob:Synthesis}, in Section \ref{sec:orthogonally_decoupled_interval_markov_decision_processes}, we introduce a class of robust Markov models, called \acrfullpl{odimdp}, where transition probabilities lie in some product ambiguity set. To abstract System \eqref{eq:System} into an \gls{odimdp}, we first partition the region of interest into a grid of discrete regions. Then, in Section \ref{sec:abstraction}, we show that one can take advantage of the product form of an \gls{odimdp} and, for each region, simply store the marginal probabilities of the original system. This allows us to construct the abstraction for System \eqref{eq:System} in a compositional manner, with improved memory complexity and guaranteeing tighter error bounds compared to existing methods. We then show how an optimal strategy for an \gls{odimdp} can be found via linear programming and mapped back to the original system,  guaranteeing efficiency. 
Before formally introducing \glspl{odimdp}, in the next Section, we start by reviewing \glspl{imdp} and the standard approach to abstracting stochastic systems to IMDPs.

\section{Preliminaries on abstractions to Interval Markov Decision Processes}\label{sec:interval_markov_decision_processes}
\Acrfullpl{imdp} (also called bounded-parameter \glspl{mdp}) are a generalization of \glspl{mdp} in which the transition probability distributions between states lie within some independent intervals \cite{givan2000bounded}.
\begin{definition}\label{def:imdp}
An \acrfull{imdp} is a tuple $M = (S, A, \ambiguityset)$ where
\begin{itemize}
    \item $S$ is a finite set of states,
    \item $A$ is a finite set of actions assumed to be available at each state, and
    \item $\ambiguityset=\{\ambiguityset_{s,a}\}_{s\in S, a\in A}$ are sets of feasible transition probability distributions with $\ambiguityset_{s,a} \in \intamb(S)$.
\end{itemize}
\end{definition}
\Glspl{imdp} are commonly used as the abstraction model for complex stochastic systems \cite{lahijanian2015formal, cauchi2019efficiency, jiang2022safe, chatterjee2008model, badings2023robust, skovbekk2023formal}.
In particular, the standard approach to abstract systems of the form of Eqn. \eqref{eq:System} to an \glspl{imdp} is first to assign to $A$ the actions in $U$. Then, the region of interest $X$ is partitioned into regions $\{s_1, \ldots, s_{|S|}\}$ and each region is associated with a state in $S$. In what follows, for the sake of simpler notation, with an abuse, we will denote by $s$ and $t$, that is, for a transition $s \xrightarrow{a} t$, both the abstract states in $S$ and the corresponding regions of $X$. Then, for a state $s\in S$ and action $a \in A$, $\ambiguityset_{s,a}$ is defined as
\begin{align}
\nonumber
\ambiguityset_{s,a}&=\{ \probdist_{s,a} \in \dists(S):  \forall t \in S,  \\
&\min_{x\in s} T(t\mid x,a) \leq \probdist_{s,a}(t) \leq  \max_{x\in s} T(t \mid x,a)\}.
\label{Eqn:AmbiguitySetIMDP}
\end{align}
Once the abstraction is built, then one can rely on existing polynomial time algorithms for \glspl{imdp} to compute reach-avoid probabilities and optimal strategies on the abstraction \gls{imdp}, which can then be mapped back to the concrete system \cite{lahijanian2015formal, cauchi2019efficiency}.
However, the abstraction approach outlined above, while general and sound, has the following drawbacks: 
\begin{itemize}
    \item Memory consumption is often the main bottleneck. In fact, to perform value iteration on an \gls{imdp} $M$, for each pair of regions $s,t \in S$ and $a\in A$ we need to store $\underline{p}_{s,a}(t) = \min_{x\in s} T(t\mid x,a)$ and $\overline{p}_{s,a}(t) = \max_{x\in s} T(t\mid x,a)$, leading to a memory complexity of $O(|S|^2|A|)$.
    \item For any $s,t \in S$, the computation of $\min_{x\in s} T(t\mid x,a)$ and $\max_{x\in s} T(t\mid x,a)$ generally requires approximations, which lead to conservatism. For instance, in the case of Gaussian additive noise $x_{k+1} = f(x_k, u_k) + v_k$, where $v_k \sim \mathcal{N}(0, \Sigma)$ with diagonal covariance matrix $\Sigma \in \mathbb{S}^{n}_{++}$, under the assumption that to each $s\in S$ is associated a hyperrectangular region, i.e., $s=[\underline{s}_1, \overline{s}_1]\times \cdots \times [\underline{s}_n, \overline{s}_n]$, the state-of-the-art approach \cite{cauchi2019efficiency, laurenti2020formal, adams2022formal} is to rely on the following under-approximation for the $\min$ case (similar reasoning holds for the $\max$ case):
    \begin{equation}\label{eq:imdp_product_ambiguity_set}
        \begin{aligned}
        \min_{x\in s} T(t\mid x,a)= 
        &\min_{x\in s} \int_{t} \mathcal{N}(y\mid f(x,a), \Sigma) \,dy \geq  \\
        &\prod_{i=1}^n \min_{x\in s} \int_{\underline{t}_i}^{\overline{t}_{i}} \mathcal{N}(y_i \mid f(x,a)_i, \Sigma_{ii}) \,dy_i,
        \end{aligned}
    \end{equation}
    which introduces conservatism by allowing each marginal to be optimized independently.
    \item The abstraction process does not exploit any structural property of the system to reduce the size of the resulting ambiguity set $\ambiguityset$. For instance, if the noise is independent across the various dimensions (as is the case for System \eqref{eq:System}), one may want to account for this information to reduce the size of the resulting feasible set of distributions in the abstraction.
\end{itemize}
In Section \ref{sec:orthogonally_decoupled_interval_markov_decision_processes}, we will introduce a new subclass of robust \glspl{mdp}, called \gls{odimdp}, where the feasible set of probabilities has a product form and requires only storing the marginal ambiguity sets, i.e. the right-hand side of Eqn. \eqref{eq:imdp_product_ambiguity_set} without the product operator. We will show that by abstracting System \eqref{eq:System} to an \gls{odimdp} one can alleviate the issues identified above and obtain state-of-the-art performance.

\section{Orthogonally Decoupled Interval Markov Decision Processes}\label{sec:orthogonally_decoupled_interval_markov_decision_processes}
Merging ideas from \glspl{imdp} \cite{givan2000bounded} and compositional analysis of \glspl{mdp} \cite{nilsson2018toward}, we propose a new subclass of robust \glspl{mdp}, which we call \emph{\acrfullpl{odimdp}}. Intuitively, in an \glspl{odimdp} for each transition, the ambiguity set of the transition probabilities is a product of marginal interval ambiguity sets.
In what follows, in this section, we first formally introduce \glspl{odimdp}. Then, we show how the memory requirements for this class of models are generally orders of magnitude lower compared to those of \glspl{imdp}. Furthermore, we also prove that they can produce tighter ambiguity sets compared to \glspl{imdp}. These properties will be employed in Section \ref{sec:abstraction} to efficiently abstract System \eqref{eq:System} into an \glspl{odimdp} and solve Problem \ref{prob:Synthesis}.

\begin{definition}\label{def:odimdp}
    An \acrfull{odimdp} with $n$ marginals, also called components, is a tuple $M = (S, A, \ambiguityset)$ where
    \begin{itemize}
        \item $S = S_1 \times \cdots \times S_n$ is a finite set of joint states with $S_i$ being the set of states in marginal $i$,
        \item $A$ is a finite set of actions assumed to be available in each state, and
        \item $\ambiguityset=\{\ambiguityset_{s,a}\}_{s\in S, a\in A}$ are sets of feasible transition probability distributions where $\ambiguityset_{s,a} = \bigotimes_{i=1}^n \ambiguityset^i_{s,a}$ with $\ambiguityset^i_{s,a} \in \intamb(S_i)$.
    \end{itemize}
\end{definition}

Similarly to \glspl{imdp} \cite{lahijanian2015formal}, a \emph{path} of an \gls{odimdp} is a sequence of states and actions $\omega = (s_0,a_0),(s_1,a_1),\dots$, where $(s_k,a_k)\in S \times A$. We denote by $\omega(k) = s_k$ the state of the path at time $k \in \mathbb{N}_0$ and by $\Omega$ the set of all paths.  
A \emph{strategy} or \emph{policy} for an \gls{odimdp} is a function $\pi : S\times \mathbb{N}_0 \to A$ that assigns an action to a given state of an \gls{odimdp}. 
We note that since the ambiguity set is independent for each source-action pair, also called $(s, a)$-rectangularity, any optimal policy is time-varying and deterministic \cite{laurenti2023unifying, suilen2024robust}.
An adversary is a function that assigns a feasible distribution $\probdist \in \ambiguityset$ to a given state \cite{givan2000bounded}. 
Given a strategy and an adversary, an \gls{odimdp} collapses to a finite Markov chain, with the transition probability matrix specified by marginal distributions.

\begin{figure}
    \centering
    \includegraphics[width=0.8\linewidth]{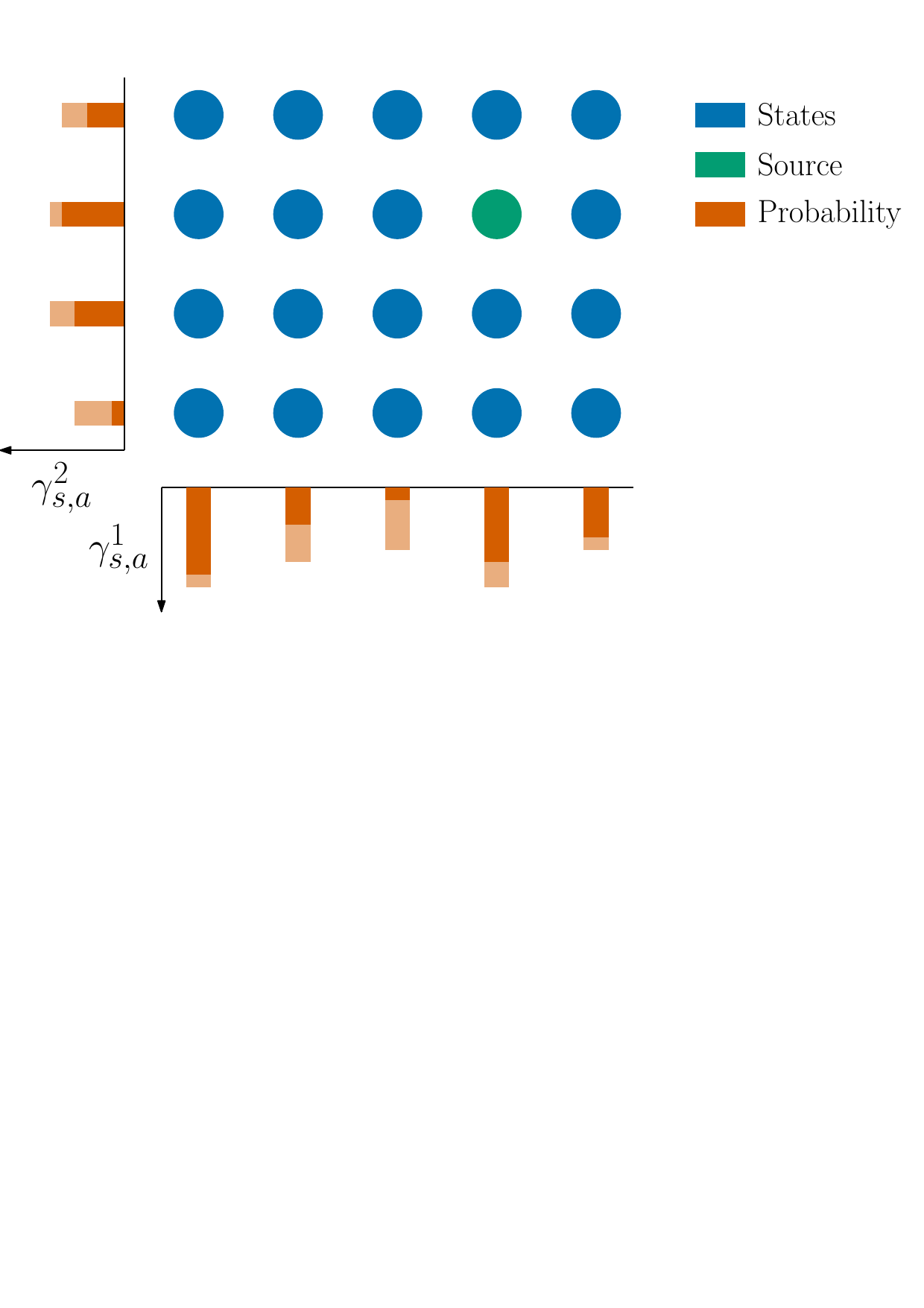}
    \caption{An example of an \gls{odimdp} where, given a source-action pair $(s, a)$ (the green state), the ambiguity set for the transition probability can be decomposed into the product between two independent interval ambiguity sets, $\ambiguityset^1_{s,a}, \ambiguityset^2_{s,a}$. In this example, it is only necessary to store 9 transitions (per source-action pair) in contrast to 20 transitions for a traditional \gls{imdp}.}
    \label{fig:decoupled_imdp_concept}
\end{figure}

\paragraph*{Analysis of space complexity}\label{subsec:space_complexity_analysis}
Before proceeding further, we now discuss the memory requirements to store an \gls{odimdp}. To simplify the presentation, we assume without loss of generality that the number of states in each marginal is equal, that is, $|S_i| = |S_{i+1}|$ for all $i = 1, \ldots, n - 1$.
Hence, we have that the number of states along each marginal is $|S_i| = \sqrt[n]{|S|}$. 
To store an \gls{odimdp}, we need to store the bounds on the transition probability. Then, for each source-action pair, we must store bounds for the ambiguity set along each marginal.
That is, the upper and lower bound for $|S_i|$ states for each marginal. In total, this requires storing $2|S||A|n\sqrt[n]{|S|}$ scalar values.
This is a crucial difference compared to traditional \gls{imdp}, where, as discussed in Section \ref{sec:interval_markov_decision_processes}, for each state-action pair, we need to store $2|S|^2|A|$ scalar values. An example to clarify the different memory requirements between \gls{odimdp} and \gls{imdp} is reported in Fig. \ref{fig:decoupled_imdp_concept}.

\subsection{Comparison of ambiguity sets}\label{subsec:theoretical_results}
At first glance, the ambiguity set of an \glspl{odimdp} may appear equivalent to that of an \gls{imdp} obtained by multiplying the interval bounds of the interval ambiguity sets of each marginal for each source-action pair, as, for instance, is done in Eqn. \eqref{Eqn:AmbiguitySetIMDP} using the approximation step in Eqn. \eqref{eq:imdp_product_ambiguity_set}.
Remarkably, this is, however, a fallacy. 
In fact, multiplying the interval bounds of the marginal ambiguity sets introduces distributions that are not part of the product ambiguity set, as they may not satisfy the additional constraints on the marginals present in \glspl{odimdp}.
To better understand why this theoretical quirk occurs, consider the following example.

\begin{example}\label{exmp:smaller_ambiguity_set}
Fig. \ref{fig:example_margin_joint_ambiguity_sets} shows an \gls{odimdp} represented by an \gls{imdp} for each marginal. The states are numbered to index into the joint probability distribution as $[0, 1]^4$. Furthermore, on the right, an interval ambiguity set is constructed from the \gls{odimdp} by multiplying the marginal bounds together corresponding to the joint transition.
If we consider a distribution $[0.4, 0.3, 0.08, 0.22]^\top$, we see that it is clearly contained in the joint interval ambiguity set on the right of Fig. \ref{fig:example_margin_joint_ambiguity_sets}. 
However, since the value $0.4$ in the first entry is equal to the upper bound for that entry in the ambiguity set, it forces $[0.5, \cdot]$ and $[0.8, \cdot]$ in the distribution for the vertical and horizontal marginals, respectively (because $0.4 = 0.5\cdot 0.8$). For the vertical marginal to sum to one, we must have $[0.5, 0.5]$. The marginal decomposition of the last element $0.22$ thus needs to be $0.5 p = 0.22$, i.e. $p = 0.22/0.5 = 0.44$, which is not contained in the horizontal marginal ambiguity set (upper bound $0.4$). Hence, the distribution contained in the joint interval ambiguity set cannot be factored into two distributions from the marginal ambiguity sets. 
\begin{figure}
    \centering
    \includegraphics[width=0.85\linewidth]{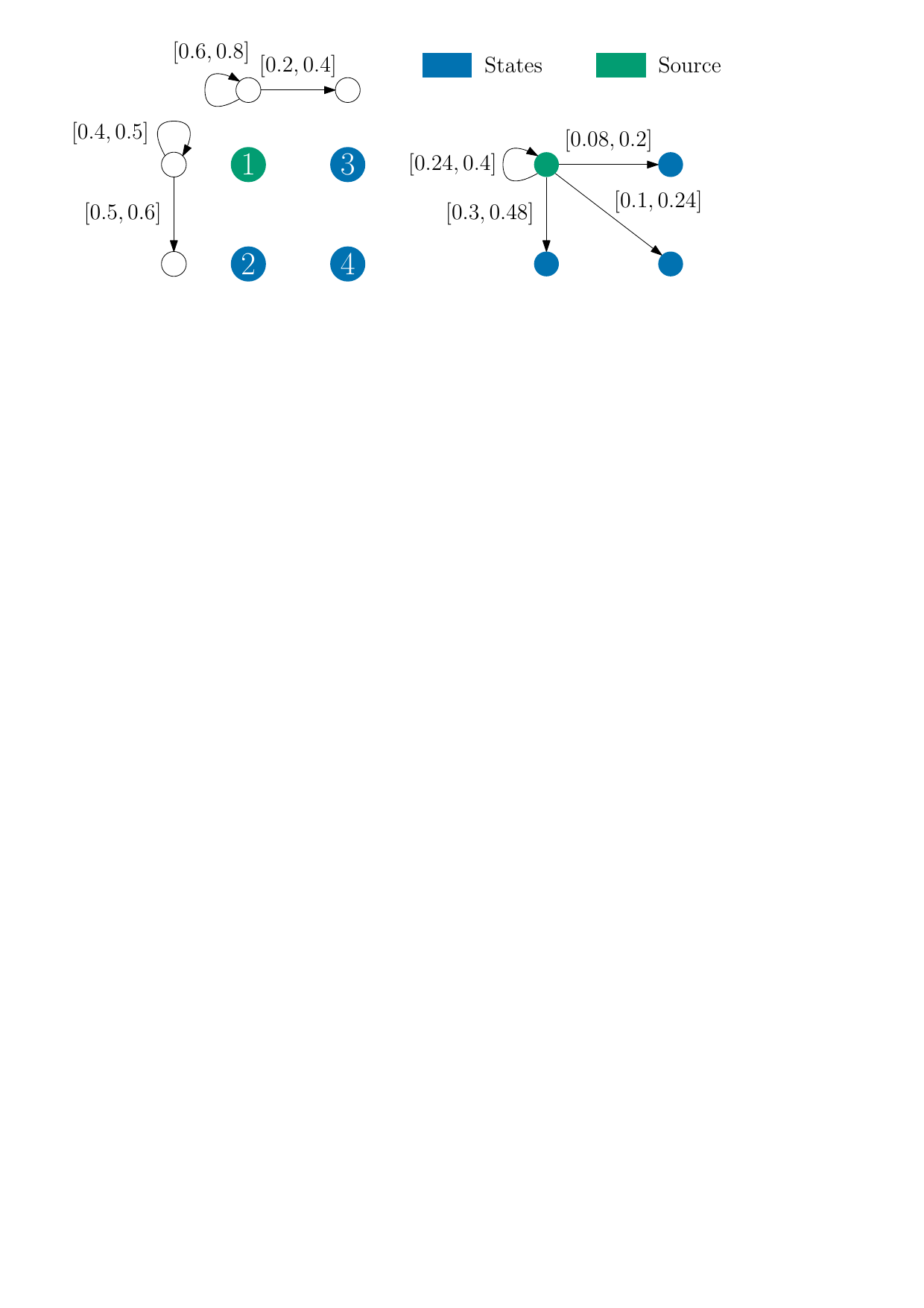}
    \caption{On the left, for the green state, we report two marginal interval ambiguity sets, i.e. an interval ambiguity set for each marginal of a product ambiguity set, with outgoing transitions to all other states. On the right, an \gls{imdp} is constructed by multiplying the interval bounds of the marginal ambiguity sets. By this multiplication of bounds, joint distributions are introduced that cannot be represented as a product of distributions from the marginal ambiguity sets.}
    \label{fig:example_margin_joint_ambiguity_sets}
\end{figure}
\end{example}
The intuition provided in Example \ref{exmp:smaller_ambiguity_set} is formalized in Theorem \ref{thm:ambiguity_set_containment} below, where we show that the ambiguity set of an \gls{odimdp} is contained in that of an \gls{imdp} obtained by multiplying the interval bounds of the interval ambiguity sets of each marginal of the \glspl{odimdp}. This result will allow us to build abstractions for System \eqref{eq:System} that not only require less memory to be stored compared to those described in Section \ref{sec:interval_markov_decision_processes}, but are also guaranteed to provide tighter error bounds for the same partition size of the state space. 

\begin{theorem}\label{thm:ambiguity_set_containment}
For $S = S_1 \times \cdots \times S_n$, consider the interval ambiguity sets $\ambiguityset^1 \in \intamb(S_1),\ldots,$ $\ambiguityset^n \in \intamb(S_n)$, where $\ambiguityset^i=\{ \probdist \in \dists(S_i) : \underline{p}^{i}(s^i) \leq \probdist(s^i) \leq \overline{p}^i(s^i) \text{ for each } s^i \in S_i \}$ . Call $\ambiguityset = \bigotimes_{i=1}^n \ambiguityset^i$. Then, it holds that 
$\ambiguityset \subseteq \bar{\ambiguityset}, $
where 
\[
        \bar{\ambiguityset} = \left\{ \probdist \in \dists(S) : \prod_{i=1}^n \underline{p}^i(s^i) \leq \probdist(s) \leq \prod_{i=1}^n \overline{p}^i(s^i) \text{ for each } s \in S \right\}.
    \]
\end{theorem}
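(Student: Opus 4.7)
The plan is a direct containment argument: take an arbitrary $\gamma \in \ambiguityset$ and verify that it satisfies both defining conditions of $\bar{\ambiguityset}$, namely being a valid probability distribution on $S$ and obeying the componentwise product bounds.

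First, I would unpack the definition. By construction of the product ambiguity set, any $\gamma \in \ambiguityset$ decomposes as $\gamma(s) = \prod_{i=1}^n \gamma^i(s^i)$ for some marginal distributions $\gamma^i \in \ambiguityset^i$. Since each $\gamma^i$ is a probability distribution on $S_i$, the product $\gamma$ is automatically a probability distribution on $S = S_1 \times \cdots \times S_n$ (nonnegativity is inherited, and $\sum_{s \in S}\gamma(s) = \prod_i \sum_{s^i \in S_i} \gamma^i(s^i) = 1$), so $\gamma \in \dists(S)$.

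Next, I would derive the interval bounds by multiplying the marginal inequalities. Each $\gamma^i \in \ambiguityset^i$ satisfies $\underline{p}^i(s^i) \le \gamma^i(s^i) \le \overline{p}^i(s^i)$ for every $s^i \in S_i$. Because all of these quantities are nonnegative probabilities (in particular $\underline{p}^i(s^i) \ge 0$ and $\gamma^i(s^i) \ge 0$), the pointwise inequalities can be multiplied across $i = 1,\ldots,n$ to give
\[
\prod_{i=1}^n \underline{p}^i(s^i) \;\le\; \prod_{i=1}^n \gamma^i(s^i) \;=\; \gamma(s) \;\le\; \prod_{i=1}^n \overline{p}^i(s^i)
\]
for every $s = (s^1,\ldots,s^n) \in S$. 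This is exactly the defining condition of $\bar{\ambiguityset}$, so $\gamma \in \bar{\ambiguityset}$, and since $\gamma$ was arbitrary we conclude $\ambiguityset \subseteq \bar{\ambiguityset}$.

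There is no real obstacle here; the only subtlety worth flagging explicitly is the nonnegativity step that legitimizes multiplying the two-sided inequalities (multiplication preserves $\le$ only for nonnegative factors), which is automatic for probabilities. It is also worth remarking, as the authors do via Example~\ref{exmp:smaller_ambiguity_set}, that the reverse inclusion fails in general: $\bar{\ambiguityset}$ may contain joint distributions whose marginals violate the componentwise bounds, so the containment is typically strict. This strictness is precisely what gives \glspl{odimdp} their tighter ambiguity sets relative to the \gls{imdp} relaxation of Eqn.~\eqref{eq:imdp_product_ambiguity_set}.
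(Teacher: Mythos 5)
Your proof is correct and follows essentially the same route as the paper's: take an arbitrary product distribution $\probdist = \bigotimes_{i=1}^n \probdist^i \in \ambiguityset$ and multiply the nonnegative marginal inequalities $\underline{p}^i(s^i) \leq \probdist^i(s^i) \leq \overline{p}^i(s^i)$ across $i$ to obtain the joint interval bounds. The extra details you supply (verifying that the product is itself a distribution, and flagging why nonnegativity licenses the multiplication) are sound and only make explicit what the paper leaves implicit.
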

\begin{proof}
    It is sufficient to show that any distribution $\probdist \in \ambiguityset$ is also contained in $\bar{\ambiguityset}$, that is, it satisfies the interval bounds $\prod_{i=1}^n \underline{p}^i(s^i) \leq \probdist(s) \leq \prod_{i=1}^n \overline{p}^i(s^i)$ for each $s \in S$.
    By multiplying the (nonnegative) inequalities $\underline{p}^i(s^i) \leq \probdist^i(s^i) \leq \overline{p}^i(s^i)$ for each component $i = 1, \ldots, n$, it holds for each $s \in S$ that
$            \prod_{i=1}^n \underline{p}^i(s^i) \leq \prod_{i=1}^n \probdist^i(s^i) \leq \prod_{i=1}^n \overline{p}^i(s^i).
$   
     Therefore, we can conclude that the product distribution $\probdist = \bigotimes_{i=1}^n \probdist^i$ is contained in $\bar{\ambiguityset}$. 
\end{proof}

\section{Abstraction}\label{sec:abstraction}
In order to describe the abstraction process of System \ref{eq:System} into an \gls{odimdp}, we first start from the case where the transition kernel is Gaussian and then move to the more general case. In what follows, we assume that the region of interest $X \subset \mathbb{R}^n$ is hyperrectangular\footnote{Note that if the region of interest is not hyperrectangular, but bounded, one can always over-approximate it with a hyperrectangular region.}, that is, $X = [\underline{x}_1, \overline{x}_1] \times \cdots \times [\underline{x}_n, \overline{x}_n]$. 
To abstract a system to an \gls{odimdp}, we start by partitioning $X$ into a grid $\tilde{S}= \tilde{S}_1 \times \cdots \times \tilde{S}_{n_x}$ and associate with each (hyperrectangular) region $s = [\underline{s}_1, \overline{s}_1] \times \cdots \times [\underline{s}_n, \overline{s}_n] \in \tilde{S}$ an abstract state in $\tilde{S}$. Similarly to Section \ref{sec:interval_markov_decision_processes}, we abuse the notation and denote by $s$ both the abstract state in $\tilde{S}$ and the corresponding region in $X$.
The decomposed abstract state is denoted by $s=(s^1, \ldots, s^{n}) \in \tilde{S}$.
In addition, we add one sink state $\breve{s}^i$ for each axis $i$ to represent exiting the set $[\underline{x}_i, \overline{x}_i]$ and denote $S_i = \{\breve{s}^i\} \cup \tilde{S}_i$. 
The reach and avoid states of the abstract model, $R$ and $O$\footnote{We abuse notation slightly by denoting both the concrete reach/avoid regions and the abstract reach/avoid states by $R$ and $O$ respectively.} are respectively the set of states satisfying $s \subseteq R$ and $s \cap O \neq \emptyset$ \cite[Section 4]{cauchi2019efficiency}. All sink states are classified as avoid states. 
Finally, we assign an abstract action $a \in A$ to each $u \in U$. 

In the following subsections, we focus on computing bounds on the transition probability starting with the Gaussian case.

\subsection{Gaussian case}\label{subsec:abstraction_additive_noise_systems}
We start by considering the case of a Gaussian transition kernel:
\begin{equation}
\label{eq:transition_kernel_Gaussian}
    T(X\mid x, u)=  \int_{X} \mathcal{N}(x' \mid \mu(x,u), \Sigma(x, u)) dx'.
\end{equation}

To define the marginal interval ambiguity sets $\ambiguityset^i_{s, a}$, we first need to introduce some notation. In particular, for each $(s,a)$ we define intervals  $[\underline{\mu}_{s, a}^i$, $\overline{\mu}_{s, a}^i]$ and $[\underline{\Sigma}_{s, a}^i,\overline{\Sigma}_{s, a}^i]$ such that for all $x\in s$:
\begin{align}
\label{Eqn:boundMean}
\underline{\mu}_{s, a}^i \leq \mu(x, a)_i \leq \overline{\mu}_{s, a}^i \quad\text{ and }\quad \underline{\Sigma}_{s, a}^i \leq \Sigma(x, a)_{ii} \leq \overline{\Sigma}_{s, a}^i. 
\end{align}
That is, $[\underline{\mu}_{s, a}^i$, $\overline{\mu}_{s, a}^i]$ and $[\underline{\Sigma}_{s, a}^i,\overline{\Sigma}_{s, a}^i]$ represent interval bounds for the mean and variance of the $i$-th marginal of System \eqref{eq:System} starting from region $s$.
Then, computation of $\ambiguityset^i_{s, a}$ reduces to computing for each $t^i \in \tilde{S}_i$ the following transition  probabilities
\begin{align}
    \underline{p}^i_{s,a}(t^i) &= \min_{\mu^i \in [\underline{\mu}_{s, a}^i, \overline{\mu}_{s, a}^i],\; \Sigma^i \in \{\underline{\Sigma}_{s, a}^i, \overline{\Sigma}_{s, a}^i\}} \int_{\underline{t}_i}^{\overline{t}_i} \mathcal{N}(y_i \mid \mu^i, \Sigma^{i}) dy_i, \label{eqn:componentwise_iower_bound}\\
    \overline{p}^i_{s,a}(t^i) &= \max_{\mu^i \in [\underline{\mu}_{s, a}^i, \overline{\mu}_{s, a}^i],\; \Sigma^i \in \{\underline{\Sigma}_{s, a}^i, \overline{\Sigma}_{s, a}^i\}} \int_{\underline{t}_i}^{\overline{t}_i} \mathcal{N}(y_i \mid \mu^i, \Sigma^{i}) dy_i. \label{eqn:componentwise_upper_bound}
\end{align}

\begin{remark}
    Note that analytical solutions for Eqn. \eqref{eqn:componentwise_iower_bound} and \eqref{eqn:componentwise_upper_bound} exist \cite{cauchi2019efficiency, adams2022formal}.
    Furthermore, we stress that the computation of the bounds on mean and variance in Eqn. \eqref{Eqn:boundMean} is a well-studied problem for which there exist tools based on convex optimization \cite{9993724, cauchi2019efficiency, adams2022formal, liu2021algorithms} that can also be applied in the context of Gaussian process regression \cite{patane2022adversarial}.
\end{remark}

For the transition to the sink state, we observe that this is equivalent to the complement of transitioning to inside $X$.
Hence, we may compute the interval bounds via the complement probability.
\begin{align}
    \underline{p}^i_{s,a}(\breve{t}^i) &= 1 - \max_{\substack{\mu^i \in [\underline{\mu}_{s, a}^i, \overline{\mu}_{s, a}^i]\\\Sigma^i \in \{\underline{\Sigma}_{s, a}^i, \overline{\Sigma}_{s, a}^i\}}} \int_{\underline{x}_i}^{\overline{x}_i} \mathcal{N}(y_i \mid \mu^i, \Sigma^{i}) dy_i, \label{eqn:componentwise_iower_bound_outside_state_space}\\
    \overline{p}^i_{s,a}(\breve{t}^i) &= 1 - \min_{\substack{\mu^i \in [\underline{\mu}_{s, a}^i, \overline{\mu}_{s, a}^i]\\\Sigma^i \in \{\underline{\Sigma}_{s, a}^i, \overline{\Sigma}_{s, a}^i\}}} \int_{\underline{x}_i}^{\overline{x}_i} \mathcal{N}(y_i \mid \mu^i, \Sigma^{i}) dy_i. \label{eqn:componentwise_upper_bound_outside_state_space}
\end{align}
Finally, the sink states are absorbing, that is, for all states $s = (s^1, \ldots, s^{i-1}, \breve{s}^i, s^{i+1}, \ldots, s^n)$, we define $\underline{p}^i_{s,a}(\breve{s}^i) = \underline{p}^i_{s,a}(\breve{s}^i) = 1$ and $\underline{p}^i_{s,a}(t^i) = \underline{p}^i_{s,a}(t^i) = 0$ for any $t^i \in \tilde{S}_i$.
From the computation of all interval bounds, we can define $\ambiguityset$ as the product of marginal interval ambiguity sets $\ambiguityset_{s, a}^i$ for each source-action pair $(s, a)$.
Then the \gls{odimdp} is $M = (S, A, \ambiguityset)$.

\subsection{Mixture Models}
Eqn. \eqref{eq:transition_kernel} is a generalization of the Gaussian case with $K > 1$. However, a key additional technical difficulty is that even if the covariance matrix of each Gaussian in the mixture is diagonal, then the covariance matrix for the mixture distribution is not necessarily diagonal.  
Specifically, the joint covariance of the mixture distribution in Eqn. \eqref{eq:transition_kernel}, omitting the dependence on $(x, u)$, is \begin{equation}\label{eq:non-diagonal_joint_covariance}
    \Sigma = \sum_{r = 1}^K \alpha_r \Sigma^r + \sum_{r = 1}^K \alpha_r(\mu^r - \bar{\mu})(\mu^r - \bar{\mu})^\top
\end{equation}
Due to the second term of Eqn. \eqref{eq:non-diagonal_joint_covariance}, the joint covariance $\Sigma$ is generally not diagonal. Consequently, we cannot directly abstract the model to \glspl{odimdp} as it does not have the required product form. To solve this issue, in what follows, we build an \glspl{odimdp} for each Gaussian in the mixture following the approach in Section \ref{subsec:abstraction_additive_noise_systems} and then abstract System \eqref{eq:System} into a mixture of \glspl{odimdp}.

\begin{definition}\label{def:mixture_odimdp}
    A mixture of $K$ \acrshortpl{odimdp} with $n$ marginals is a tuple $M = (S, A, \ambiguityset, \ambiguityset^\alpha)$ where
    \begin{itemize}
        \item $S = S_1 \times \cdots \times S_n$ is a finite set of joint states with $S_i$ being the set of states in marginal $i$,
        \item $A$ is a finite set of actions,
        \item $\ambiguityset=\{\ambiguityset_{r,s,a}\}_{r \in K, s\in S, a\in A}$ are a set  of $K$ product ambiguity sets, where $\ambiguityset_{r,s,a} = \bigotimes_{i=1}^n \ambiguityset^i_{r,s,a}$ with $\ambiguityset^i_{r,s,a} \in \intamb(S_i)$, and
        \item $\ambiguityset^\alpha = \{\ambiguityset^\alpha_{s,a}\}_{s\in S, a\in A}$ are sets of feasible weightings distributions, where $\ambiguityset^\alpha_{s,a} \in \intamb(K)$. A feasible weighting distribution for a source-action pair $(s, a)$ is denoted by $\alpha_{s,a} \in \ambiguityset^\alpha_{s,a}$.
    \end{itemize}
\end{definition}

$M = (S, A, \ambiguityset, \ambiguityset^\alpha)$ can be interpreted as a set of $K$ \glspl{odimdp} sharing the same (decomposed) states $S$ and the same set of actions $A$, but that can have different individual ambiguity sets $\ambiguityset_{r,s,a}$ for each source-action pair $(s, a)$. The weighting $\alpha_{s,a} \in \ambiguityset^\alpha_{s,a}$ defines the probability of selecting a component from the mixture.
Thus, a feasible distribution for $M$ is any distribution $\probdist_{s,a} = \sum_{r \in K} \alpha_{s,a}(r) \probdist^r_{s,a}$, where $\alpha_{s,a} \in \ambiguityset^\alpha_{s,a}$ and $\probdist^r_{s,a} \in \ambiguityset_{r,s,a}$ for each $r \in K$.

The process of abstracting System \eqref{eq:System} to a mixture of \glspl{odimdp}, is as follows:
\begin{enumerate}
    \item $S$ and $A$ are computed similarly to Section \ref{subsec:abstraction_additive_noise_systems} and are shared among all Gaussians in the mixture,
    \item for the $r$-th Gaussian in the mixture follow the approach in Section \ref{subsec:abstraction_additive_noise_systems} to compute $\ambiguityset_{r,s,a}$ for any $s \in S$ and $a\in A$ , and 
    \item for each source-action pair $(s, a)$, the interval ambiguity set $\ambiguityset_{s,a}^\alpha \in \intamb(K)$ is constructed such that for all $x\in s$ there exists  ${\alpha}_{s,a} \in \ambiguityset_{s,a}^\alpha$ with  $
        \alpha^r(x, a) = {\alpha}_{s,a}(r), \text{ for all } r \in K.$
    Similar to the marginal interval ambiguity sets case, we do that by constructing $\ambiguityset_{s,a}^\alpha$ such that for each $r \in K$ for all ${\alpha}_{s,a} \in \ambiguityset_{s,a}^\alpha$ it holds that $
         \min_{x \in s} \alpha^r(x, a) \leq {\alpha}_{s,a}(r) \leq  \max_{x \in s} \alpha^r(x, a).
    $
    
\end{enumerate}

\section{Synthesis of Optimal Policies for odIMDPs}\label{sec:value_iteration}
Once an abstraction of System \eqref{eq:System} into an \glspl{odimdp} or a mixture of \glspl{odimdp} is built, what is left to do is to synthesize an optimal strategy, which can then be mapped back on the original system. In what follows, we consider the \glspl{odimdp} case; the mixture case follows similarly as we will illustrate in Eqn.  \eqref{eq:value_iteration_mixture}.
For an \glspl{odimdp} $M = (S, A, \ambiguityset)$ synthesizing an optimal strategy for $ P_{\mathrm{ra}}(R, O, x_0, \pi, H)$, reduces
to the following optimization problem:
\begin{equation}
\label{eq:goal}
    \begin{aligned}
        \max_{\pi} \min_{\probdist} \mathbb{P}_{\pi, \probdist}[&\omega \in \Omega \mid \; \exists k \in \{0, \ldots, K\}, \, \omega(k)\in R,\\
        &\;\forall k' \in \{0, \ldots, k\}, \omega(k') \notin O \, \wedge \, \omega(k') \in \tilde{S}].
    \end{aligned}
\end{equation}
where $\mathbb{P}_{\pi, \probdist}$ is the probability of the Markov chain induced by strategy $\pi$ and distribution $\probdist$. 
In particular, similar to the \glspl{imdp} case \cite{givan2000bounded, lahijanian2015formal}, Eqn. \eqref{eq:goal} can be computed by solving the following pessimistic robust value iteration:
\begin{equation}\label{Eq:RobustValueIteration}
    \begin{aligned}
        V_{0}(s) &=\mathbf{1}_{R}(s) \\
        V_{k}(s) &=\max_{a \in A} g\left(s, \min_{\probdist_{s,a} \in \ambiguityset_{s,a}} \sum_{t \in S} V_{k-1}(t) \probdist_{s,a}(t) \right)\\
        &=\max_{a \in A}  g\left(s, \min_{\probdist^i_{s,a} \in \ambiguityset^i_{s,a}, \, i=1,\ldots,n} \, \sum_{t \in S} V_{k-1}(t) \prod_{i=1}^{n} \probdist^i_{s,a}(t^i)  \right),
    \end{aligned}
\end{equation}
where $g(s, v) = \mathbf{1}_{R}(s) + \mathbf{1}_{S \setminus T}(s) v$ with $T = R \cup O$ and the last equality is due to the fact that for \glspl{odimdp} $\ambiguityset_{s,a}=\bigotimes_{i=1}^{n} \ambiguityset^i_{s,a} $. That is, the ambiguity set is the product of the marginal ambiguity sets.
If a strategy is given, we denote the value function by $V^\pi_{k}(s)$ and if furthermore the inner minimization is replaced by a maximization, we use the notation $\hat{V}^\pi_k(s)$.
As a consequence of the product structure, solving Eqn. \eqref{Eq:RobustValueIteration} reduces to iteratively solving the following optimization problem:
\begin{equation}\label{Eq:MultilinearRobustValueIteration}
    \min_{\probdist^i_{s,a} \in \ambiguityset^i_{s,a}, \, i=1,\ldots,n} \, \sum_{t \in S} V_{k-1}(t) \prod_{i=1}^{n} \probdist^i_{s,a}(t^i) 
\end{equation}
Because of its multilinear structure, Eqn. \eqref{Eq:MultilinearRobustValueIteration} cannot be solved directly using standard linear programming approaches for \glspl{imdp}, e.g. O-maximization \cite{givan2000bounded}. 
To solve this issue, we develop a divide-and-conquer approach in which we recursively compute a lower bound on Eqn. \eqref{Eq:MultilinearRobustValueIteration} by decomposing the problem into a set of linear programming.
Specifically, we compute an under-approximation of Eqn. \eqref{Eq:MultilinearRobustValueIteration}, 
$ W_{s,a}^{k}$ defined recursively as follows:

\begin{figure}
    \centering
    \includegraphics[width=0.8\linewidth]{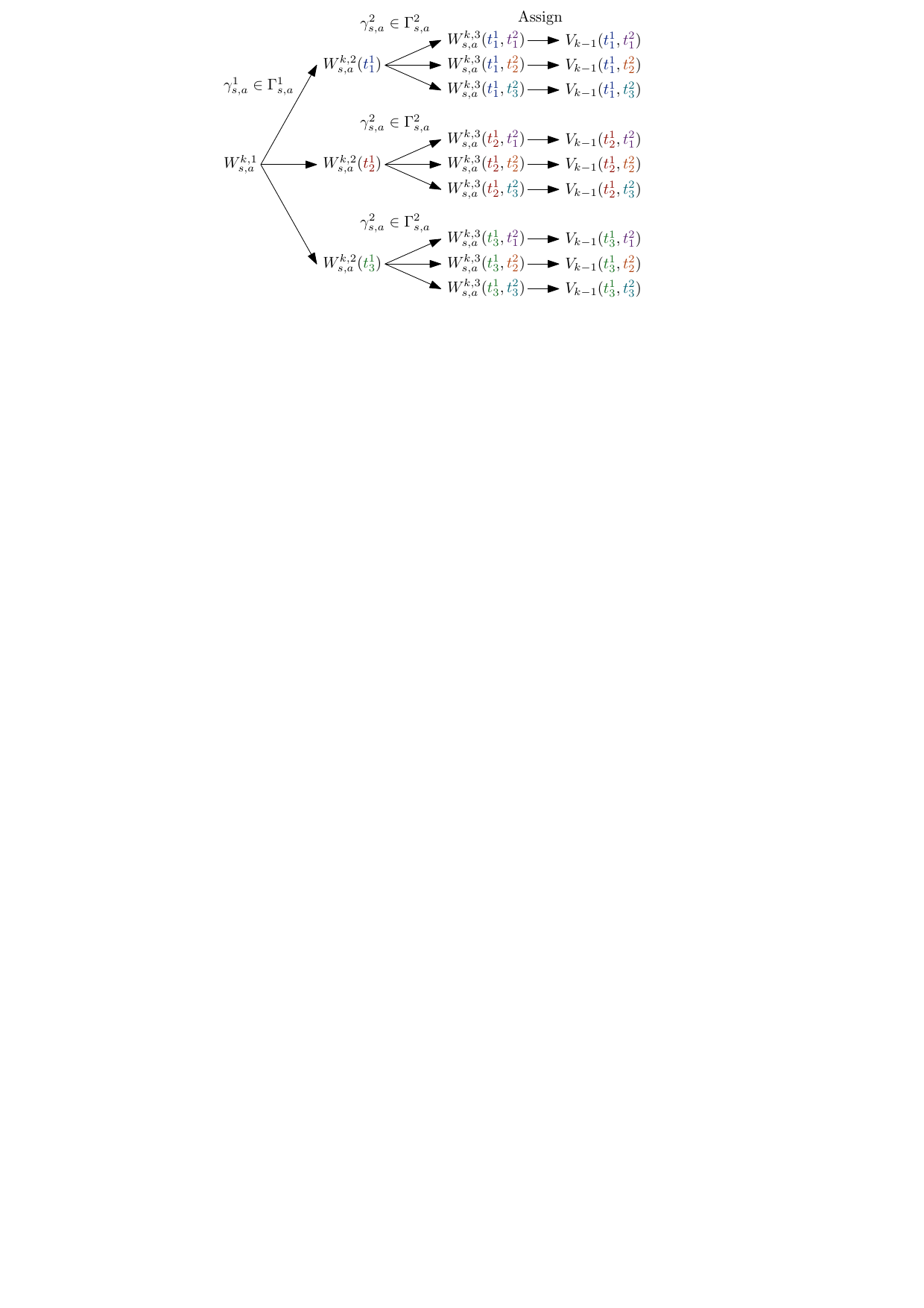}
    \caption{An example of the recursive structure for the proposed algorithm for value iteration over \glspl{odimdp}. The example is for an \gls{odimdp} with two marginals of three states each. Notice going right-to-left that we optimize over $\probdist^2_{s,a}$ three times, as for each subproblem, we assume $t^1$ is given.}
    \label{fig:recursive_structure_value_iteration}
\end{figure}

\begin{equation}\label{Eq:DecoupledRobustValueIteration}
    \begin{aligned}
        W_{s,a}^{k,n}(t^1, \ldots, t^n) &= V_{k - 1}(t)\\
        W_{s,a}^{k,i-1}(t^1, \ldots, t^{i-1}) &= \min_{\probdist^i_{s,a} \in \ambiguityset^i_{s,a}} \sum_{t^i \in S_i} W_{s,a}^{k,i}(
            t^1, \ldots, t^i) \probdist^i_{s,a}(t^i),\\
            &\qquad \qquad \text{ for } i = 2, \ldots, n \\
      W_{s,a}^{k} :=  W_{s,a}^{k,0} &= \min_{\probdist^1_{s,a} \in \ambiguityset^1_{s,a}} \sum_{t^1 \in S_1} W_{s,a}^{k,1}(t^1) \probdist^1_{s,a}(t^1).
    \end{aligned}
\end{equation}
Fig. \ref{fig:recursive_structure_value_iteration} gives an example of Eqn. \eqref{Eq:DecoupledRobustValueIteration} for a 2-marginals \gls{odimdp} with three states in each marginal. The intuition is that we derive a bound for Eqn. \eqref{Eq:MultilinearRobustValueIteration} by solving a set of simpler optimization problems, one marginal at a time.
In particular, $W_{s,a}^{k,i-1}(t^1, \ldots, t^{i-1})$ optimizes the expectation of $W_{s,a}^{k,i}$, which is itself a bound of $V_{k - 1}$, for the marginal $i$, while keeping the values of the other marginals fixed to $t^1, \ldots, t^{i-1}$. 
Note that, as illustrated in Fig. \ref{fig:recursive_structure_value_iteration}, the resulting algorithm is exponential in $n$. However, as each of the optimization problems is a particularly simple linear problem that can be solved efficiently using, e.g., the O-maximization algorithm \cite{lahijanian2015formal, givan2000bounded} and as we will show in Section \ref{subsec:time_complexity_analysis}, the resulting computational time complexity will still be lower compared to the approach described in Section \ref{sec:interval_markov_decision_processes} based on IMDPs. First, in the rest of this Section, in Theorem \ref{thm:divide_and_conquer_algorithm_sound} and Proposition \ref{prop:divide_and_conquer_algorithm_distribution_containment} we show, respectively, that the approach in Eqn. \eqref{Eq:DecoupledRobustValueIteration} is sound and guaranteed to improve the tightness of the results compared to the approach described in Section \ref{sec:interval_markov_decision_processes}. 

\begin{theorem}\label{thm:divide_and_conquer_algorithm_sound}
    For any source-action pair $(s, a) \in S \times A$, it holds that \[
        W_{s, a}^{k,0} \leq \min_{\probdist^i_{s, a} \in \ambiguityset^i_{s, a}, i=1,\ldots,n} \sum_{t \in S} V_{k-1}(t) \prod_{i=1}^{n} \probdist^i_{s, a}(t^i).
    \]
\end{theorem}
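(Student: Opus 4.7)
The plan is to prove the stated inequality by reverse induction on $i$, strengthened as follows: for every $i \in \{0,1,\ldots,n\}$ and every partial tuple $(t^1,\ldots,t^i) \in S_1\times\cdots\times S_i$,
\[
W_{s,a}^{k,i}(t^1,\ldots,t^i) \;\leq\; \min_{\probdist^{j}_{s,a} \in \ambiguityset^{j}_{s,a},\; j>i}\; \sum_{t^{i+1},\ldots,t^n} V_{k-1}(t^1,\ldots,t^n) \prod_{j=i+1}^{n} \probdist^{j}_{s,a}(t^j).
\]
Taking $i=0$ (with an empty tuple on the left) yields exactly the inequality in the theorem. The base case $i=n$ holds with equality, since the recursion defines $W_{s,a}^{k,n}(t^1,\ldots,t^n) = V_{k-1}(t)$ and the right-hand side is an empty sum and empty product, evaluating to the same quantity.

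For the inductive step, I substitute the hypothesis at level $i$ into the recursion
\[
W_{s,a}^{k,i-1}(t^1,\ldots,t^{i-1}) = \min_{\probdist^i_{s,a} \in \ambiguityset^i_{s,a}} \sum_{t^i \in S_i} \probdist^i_{s,a}(t^i)\, W_{s,a}^{k,i}(t^1,\ldots,t^i),
\]
which produces a minimization over $\probdist^i_{s,a}$ of a nonnegatively weighted sum whose summands are themselves minima over $(\probdist^{i+1}_{s,a},\ldots,\probdist^n_{s,a})$. The single algebraic ingredient I need is the standard min--sum interchange
\[
\sum_{k} a_k \min_{y} f(k,y) \;\leq\; \min_{y} \sum_{k} a_k\, f(k,y), \qquad a_k \geq 0,
\]
applied with $a_k = \probdist^i_{s,a}(t^i)$ and $y=(\probdist^{i+1}_{s,a},\ldots,\probdist^n_{s,a})$. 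This pulls the inner minimum outside the $t^i$-sum, after which merging the two nested minima into a single joint minimum over $(\probdist^i_{s,a},\probdist^{i+1}_{s,a},\ldots,\probdist^n_{s,a})$ delivers the claim at level $i-1$.

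Conceptually, the inequality expresses the fact that the recursive scheme in Eqn. \eqref{Eq:DecoupledRobustValueIteration} is a relaxation of the joint minimization in Eqn. \eqref{Eq:MultilinearRobustValueIteration}: whereas in the joint problem each $\probdist^i_{s,a}$ is one fixed element of $\ambiguityset^i_{s,a}$, the recursion implicitly permits $\probdist^i_{s,a}$ to depend on the earlier realizations $(t^1,\ldots,t^{i-1})$, since its inner minimization is carried out independently for every outer partial tuple. Any feasible joint solution embeds into the relaxed one by using the same $\probdist^i_{s,a}$ for every conditioning, so the relaxed optimum is never larger. I do not expect a serious obstacle; the only care required is the bookkeeping of the conditioned coordinates through the induction and the routine justification of the min--sum exchange, which is valid precisely because the weights are probabilities and hence nonnegative.
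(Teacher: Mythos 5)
Your proof is correct and is essentially the paper's own argument: the paper writes out the same reverse unrolling of the recursion as an explicit chain of inequalities, with the identical key step of exchanging a nonnegatively weighted sum of minima for the minimum of the weighted sum. Your formulation as a reverse induction with a strengthened hypothesis is just a cleaner packaging of that chain for general $n$.
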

\begin{proof}
\begin{align*}
& W_{s, a}^{k,0} = \min_{\probdist^1_{s,a} \in \ambiguityset^1_{s,a}} \sum_{t^1 \in S_1} W_{s,a}^{k,1}(t^1) \probdist^1_{s,a}(t^1) = \\ 
& \min_{\probdist^1_{s,a} \in \ambiguityset^1_{s,a}} \sum_{t^1 \in S_1} \min_{\probdist^2_{s,a} \in \ambiguityset^2_{s,a}} \sum_{t^2 \in S_2}  W_{s,a}^{k,2}(t^1,t^2) \probdist^1_{s,a}(t^1)\probdist^2_{s,a}(t^2) \leq \\
& \min_{\probdist^1_{s,a} \in \ambiguityset^1_{s,a}} \min_{\probdist^2_{s,a} \in \ambiguityset^2_{s,a}} \sum_{t^1 \in S_1} \sum_{t^2 \in S_2}  W_{s,a}^{k,2}(t^1,t^2) \probdist^1_{s,a}(t^1)\probdist^2_{s,a}(t^2) \leq 
\\
&\min_{\probdist^i_{s, a} \in \ambiguityset^i_{s, a}, i=1,\ldots,n} \sum_{t^1 \in S_1}... \sum_{t^n \in S_n}  W_{s,a}^{k,n}(t^1,...,t^n) \prod_{i=1}^{n} \probdist^i_{s, a}(t^i) =\\
&\min_{\probdist^i_{s, a} \in \ambiguityset^i_{s, a}, i=1,\ldots,n} \sum_{t \in S} V_{k-1}(t) \prod_{i=1}^{n} \probdist^i_{s, a}(t^i).
\end{align*}
\end{proof}

Note that an alternative option to the divide-and-conquer algorithm considered here would be to simply multiply the interval bounds of the marginal ambiguity sets for each source-action pair, obtaining an ambiguity set identical to the one in Eqn. \eqref{Eqn:AmbiguitySetIMDP}. As already mentioned, we do not do that because it would increase the conservativism of the resulting approach. This is proved in the following proposition that guarantees that our approach in Eqn. \eqref{Eq:DecoupledRobustValueIteration} is guaranteed not to be more conservative than the one described in Section \ref{sec:interval_markov_decision_processes}. In practice, in the experimental section, we show how our approach consistently returns substantially tighter bounds. 

\begin{proposition}\label{prop:divide_and_conquer_algorithm_distribution_containment}
For an \glspl{odimdp} $M = (S, A, \ambiguityset)$ consider the function $V_{k-1} : S \to \mathbb{R}$ and define the interval ambiguity set associated to $M$ as $    \bar{\ambiguityset}_{s, a} = \left\{ \probdist \in \dists(S) : \underline{p}_{s,a}(t) \leq \probdist_{s, a}(t) \leq \overline{p}_{s,a}(t) \text{ for each } t \in S \right\}, $
    where $\underline{p}_{s,a}(t) = \prod_{i=1}^n \underline{p}_{s, a}^i(t^i)$ and $\overline{p}_{s,a}(t) = \prod_{i=1}^n \overline{p}_{s, a}^i(t^i)$
    Then, for any $(s, a) \in S \times A$ it holds that $W^k_{s,a}\geq \min_{\probdist_{s,a} \in \bar{\ambiguityset}_{s,a} } \, \sum_{t \in S} V_{k-1}(t)  \probdist_{s,a}(t^i).$
\end{proposition}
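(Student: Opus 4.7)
The plan is to exhibit an explicit joint distribution $\probdist^\star \in \bar{\ambiguityset}_{s,a}$ whose expected value of $V_{k-1}$ equals $W^k_{s,a}$; since the right-hand side of the desired inequality is a minimum over $\bar{\ambiguityset}_{s,a}$, this immediately yields the claim.

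To construct $\probdist^\star$, I would unfold the recursion \eqref{Eq:DecoupledRobustValueIteration} and, for each level $i \in \{1,\ldots,n\}$ and each tuple $(t^1,\ldots,t^{i-1})$, select an optimal marginal $\probdist^{\star,i}_{t^1,\ldots,t^{i-1}} \in \ambiguityset^i_{s,a}$ attaining the minimum defining $W^{k,i-1}_{s,a}(t^1,\ldots,t^{i-1})$ (with no conditioning when $i=1$). The joint distribution is then defined by the chain rule,
\[
    \probdist^\star(t^1,\ldots,t^n) \;=\; \probdist^{\star,1}(t^1) \cdot \probdist^{\star,2}_{t^1}(t^2) \cdots \probdist^{\star,n}_{t^1,\ldots,t^{n-1}}(t^n),
\]
which is a valid element of $\dists(S)$ by iterated telescoping of sums along the marginals.

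It then remains to verify that $\probdist^\star \in \bar{\ambiguityset}_{s,a}$ and that $W^k_{s,a}$ is precisely the expectation of $V_{k-1}$ under $\probdist^\star$. For membership, feasibility of each subproblem gives $\underline{p}^i_{s,a}(t^i) \leq \probdist^{\star,i}_{t^1,\ldots,t^{i-1}}(t^i) \leq \overline{p}^i_{s,a}(t^i)$; multiplying these nonnegative inequalities over $i$ yields $\underline{p}_{s,a}(t) \leq \probdist^\star(t) \leq \overline{p}_{s,a}(t)$. Unrolling \eqref{Eq:DecoupledRobustValueIteration} with these optimal minimizers in place then gives $W^k_{s,a} = \sum_{t \in S} V_{k-1}(t)\, \probdist^\star(t) \geq \min_{\probdist \in \bar{\ambiguityset}_{s,a}} \sum_{t \in S} V_{k-1}(t)\, \probdist(t)$, as claimed.

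The main subtlety, and the step that demands care, is that the minimizers $\probdist^{\star,i}_{t^1,\ldots,t^{i-1}}$ depend on the earlier coordinates, so the resulting $\probdist^\star$ is generally \emph{not} a product of marginals and therefore need not lie in the original product ambiguity set $\ambiguityset_{s,a}$. What rescues the argument is that the target set $\bar{\ambiguityset}_{s,a}$ only constrains the joint probabilities coordinate-wise by the products $\prod_i \underline{p}^i_{s,a}(t^i)$ and $\prod_i \overline{p}^i_{s,a}(t^i)$, and these coordinate-wise bounds are preserved under chained multiplication of nonnegative conditional factors regardless of the dependence structure between the coordinates.
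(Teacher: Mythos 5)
Your proof is correct and follows essentially the same route as the paper's: exhibit the distribution induced by the recursion \eqref{Eq:DecoupledRobustValueIteration} as an explicit element of $\bar{\ambiguityset}_{s,a}$ whose expectation of $V_{k-1}$ equals $W^k_{s,a}$, then invoke the definition of the minimum. Your writeup is in fact more careful than the paper's on the one subtle point — that the level-$i$ minimizers depend on $(t^1,\ldots,t^{i-1})$, so the resulting joint distribution is a chained product rather than a product of marginals, yet still satisfies the coordinate-wise bounds defining $\bar{\ambiguityset}_{s,a}$.
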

\begin{proof}
To conclude the proof, it suffices to show that for any source-action pair $(s, a) \in S \times A$, the joint distribution $\probdist_{s, a}(t)$ obtained by optimizing Eqn. \eqref{Eq:DecoupledRobustValueIteration} is contained $\bar{\ambiguityset}_{s, a}$. By definition of sets $\ambiguityset^i_{s,a}$ in Eqn. \eqref{Eq:DecoupledRobustValueIteration} it holds that the $i$-th marginal of $\probdist_{s, a}(t)$ must be contained within $[\underline{p}_{s, a}^i(t^i), \overline{p}_{s, a}^i(t^i)]$. This implies that $
        \prod_{i=1}^n \underline{p}_{s, a}^i(t^i) \leq \probdist_{s, a}(t) \leq \prod_{i=1}^n \overline{p}_{s, a}^i(t^i). $
\end{proof}

\begin{table*}
    \centering
    \caption{A summary of the benchmarks and their characteristics. Note that the number of states for the different experiments is the specified amount unless otherwise stated.}
    {\small\begin{tabular}{lclcccc}
        \toprule
        Name & Dimension & Dynamics type & Property & \# inputs & \# states \\
        \midrule
        Car parking \cite{van2023syscore} & 2 & Additive linear & Reach-avoid & 9 & 1600 \\
        Robot reachability \cite{10.1007/978-3-031-68416-6_15} & 2 & Additive linear (non-linear control) & Reachability & 121 & 400 \\
        Robot reach-avoid  \cite{10.1007/978-3-031-68416-6_15} & 2 & Additive linear (non-linear control) & Reach-avoid & 441 & 1600 \\
        Building automation system  \cite{10.1007/978-3-031-68416-6_15} & 4 & Additive affine & Safety & 4 & 1225 \\
        Van der Pol \cite{van2023syscore} & 2 & Additive polynomial & Reachability & 11 & 2500 \\
        NNDM Cartpole & 4 & Additive \gls{nndm} & Safety & 2 & 192000 \\
        6D linear model & 6 & Additive linear & Safety & 0 & 262144 \\
        7D linear model & 7 & Additive linear & Safety & 0 & 2097152 \\
        Dubin's car GP \cite{reed2023promises} & 3 & Gaussian process with deep kernel learning & Reach-avoid & 7 & 25600 \\
        Stochastic switched linear & 2 & Gaussian mixture, each linear & Reach-avoid & 0 & 1600 \\
        \bottomrule
    \end{tabular}}
    \label{tab:benchmark_description}
\end{table*}

The proposed divide-and-conquer algorithm extends readily to mixtures of \glspl{odimdp} by executing the algorithm in Eqn. \eqref{Eq:DecoupledRobustValueIteration} for each element of the mixture and optimizing for the worst mixture distribution. That is, by solving:
\begin{equation}\label{eq:value_iteration_mixture}
    W_{s,a}^{k} := \min_{\alpha \in \ambiguityset^\alpha_{s,a}} \sum_{r\in K}\alpha(r) W_{r,s,a}^{k,0},
\end{equation}
where $ W_{r,s,a}^{k,0}$ is computed as in Eqn. \eqref{Eq:DecoupledRobustValueIteration}. 
As $\ambiguityset^\alpha_{s,a}$ is an interval ambiguity set, the above is a linear program that can be efficiently solved using O-maximization \cite{givan2000bounded}.

\paragraph*{Analysis of time complexity}\label{subsec:time_complexity_analysis}
Similarly to the analysis of the space complexity (see Subsection \ref{subsec:space_complexity_analysis}), we assume that the number of states in each marginal is equal, i.e. $|S_i| = \sqrt[m]{|S|}$.
Since value iteration is performed independently for each source-action pair, to compute the time complexity of the algorithm, we can analyze each pair separately.
To do this, we start by observing the number of (branching) internal nodes in the recursion tree required to solve Eqn. \eqref{Eq:DecoupledRobustValueIteration} (see Fig. \ref{fig:recursive_structure_value_iteration}) is a geometric series with base $|S_i|$ up to exponent $n - 1$. %
The complexity of the geometric series is $O\left(\left(|S_i|\right)^{n-1}\right)$ \cite{Cormen2001introduction}.
At each internal node of the tree, the O-maximization algorithm \cite{lahijanian2015formal} is executed once on a set of size $|S_i|$.
O-maximization is limited by the required sorting, which is of complexity $O(|S_i| \lg |S_i|)$.
Now, since $|S_i| = \sqrt[n]{|S|} = |S|^\frac{1}{n}$, the complexity for each source-action pair is $O(|S|\lg |S_i|)$, and the full complexity is thus $O(|S|^2|A| \lg\sqrt[n]{|S|})$.
In contrast, value iteration for a regular \gls{imdp} \cite{givan2000bounded} has a time complexity of $O(|S|^2 |A| \lg |S|)$.

\subsection{Correctness}\label{subsec:correctness}
In this subsection, we map the strategy $\pi$ to a switching strategy $\pi_x : \mathbb{R}^n \times \mathbb{N}_0 \to U$ for System \eqref{eq:System}, ensuring the correctness of our aproach. To this end, define the function $J : \mathbb{R}^n \to S$ that maps each concrete state to an abstract state and its corresponding region. That is, $J(x) = s$ if and only if $x \in s$. Then the switching strategy for System \eqref{eq:System} is defined as $\pi_x(x, k) = \pi(J(x), k)$ for all $x \in \mathbb{R}^n$. The following theorem ensures the correctness of $\pi_x$.

\begin{theorem}
    Let $M$ be an \gls{odimdp} abstraction of System \eqref{eq:System}. Then, for any strategy $\pi$, it holds that for any $x_o\in X$ \[
        P_{\mathrm{ra}}(R, O, x_0, \pi_x, H) \in [V^\pi_H(J(x_0)), \hat{V}^\pi_H(J(x_0))],
    \]
    where $V^\pi_H(s), \hat{V}^\pi_H(s)$ are the lower and upper bounds on the satisfaction probability for $M$, as introduced in Section \ref{sec:value_iteration}.
\end{theorem}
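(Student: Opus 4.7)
The plan is to prove this by induction on the horizon $H$, via a simulation argument linking the concrete kernel to the abstraction's ambiguity set, combined with the soundness of the divide-and-conquer recursion (Theorem~\ref{thm:divide_and_conquer_algorithm_sound}).

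First, I would establish the key simulation property: for any source-action pair $(s, a)$ and any $x \in s$, the true one-step transition distribution over the partition belongs to $\ambiguityset_{s, a}$. In the single-Gaussian case~\eqref{eq:transition_kernel_Gaussian}, diagonal covariance gives for any hyperrectangular target cell $t = t^1 \times \cdots \times t^n$ the factorization $T(t \mid x, a) = \prod_{i=1}^n \int_{\underline{t}_i}^{\overline{t}_i} \mathcal{N}(y_i \mid \mu(x, a)_i, \Sigma(x, a)_{ii})\, dy_i$. By the construction of $\underline{p}^i_{s, a}, \overline{p}^i_{s, a}$ in Eqns.~\eqref{eqn:componentwise_iower_bound}--\eqref{eqn:componentwise_upper_bound} (and the sink-state analogues), the $i$-th marginal of $T(\cdot \mid x, a)$ lies in $\ambiguityset^i_{s, a}$, so the joint distribution lies in $\bigotimes_{i=1}^n \ambiguityset^i_{s, a} = \ambiguityset_{s, a}$.

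With the simulation property in hand, I would induct on $H$. The base case uses partition-alignment of $R$ (and of the sink structure representing $X^c$): $V^\pi_0(J(x_0)) = \mathbf{1}_R(x_0) = \hat{V}^\pi_0(J(x_0)) = P_{\mathrm{ra}}(R, O, x_0, \pi_x, 0)$. For the inductive step, the Markov property applied to~\eqref{eq:System} yields, with $s_0 = J(x_0)$ and $a_0 = \pi(s_0, 0)$,
\[
P_{\mathrm{ra}}(R, O, x_0, \pi_x, H) = \mathbf{1}_R(x_0) + \mathbf{1}_{X \setminus (R \cup O)}(x_0) \int P_{\mathrm{ra}}(R, O, x', \pi_x, H-1)\, T(dx' \mid x_0, a_0).
\]
Splitting the integral over the abstract cells and invoking the inductive hypothesis pointwise (using that $V^\pi_{H-1} \circ J$ is piecewise constant on the partition and vanishes on avoid/sink cells) bounds the integral below by $\sum_{t \in S} V^\pi_{H-1}(t)\, T(t \mid x_0, a_0)$. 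Since $T(\cdot \mid x_0, a_0) \in \ambiguityset_{s_0, a_0}$ by the simulation lemma, this is at least $\min_{\probdist \in \ambiguityset_{s_0, a_0}} \sum_t V^\pi_{H-1}(t)\, \probdist(t)$, which by Theorem~\ref{thm:divide_and_conquer_algorithm_sound} is at least $W^{H, 0}_{s_0, a_0}$; combining with the $\mathbf{1}_R$ term recovers $V^\pi_H(s_0)$. The upper bound follows symmetrically: the optimistic divide-and-conquer obtained by flipping every $\min$ in~\eqref{Eq:DecoupledRobustValueIteration} to $\max$ is an over-approximation of the joint maximum, proved by the same interchange-of-sum-and-optimization manipulation as in Theorem~\ref{thm:divide_and_conquer_algorithm_sound} with reversed inequalities.

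The main obstacle will be the Gaussian-mixture case, where the joint covariance~\eqref{eq:non-diagonal_joint_covariance} is not diagonal and the kernel does not factorize globally. My fix would be to condition on the mixture component: conditionally on $r$, the transition is a diagonal-covariance Gaussian, so the simulation lemma applies and the inductive step gives $E_r[V^\pi_{H-1} \mid x_0, a_0] \geq W^{H, 0}_{r, s_0, a_0}$. Since the true weights $(\alpha^r(x_0, a_0))_r$ lie in $\ambiguityset^\alpha_{s_0, a_0}$ by construction and the right-hand side of~\eqref{eq:value_iteration_mixture} is linear in $\alpha$, averaging over components preserves the lower bound and yields $W^H_{s_0, a_0}$ as defined in~\eqref{eq:value_iteration_mixture}, with a symmetric treatment for the upper bound.
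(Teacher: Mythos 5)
Your proposal is correct and follows essentially the same route the paper indicates: the paper only sketches this proof as an induction on the value iteration of Eqn.~\eqref{Eq:RobustValueIteration} (following \cite[Theorem 4]{laurenti2023unifying}) combined with Theorem~\ref{thm:divide_and_conquer_algorithm_sound} for soundness at each step, and your write-up fills in exactly those details -- the one-step simulation lemma that $T(\cdot\mid x,a)$ pushes forward into $\ambiguityset_{s,a}$ by the product structure of the diagonal-covariance kernel, the Bellman-style induction, the symmetric optimistic bound, and the per-component treatment of the mixture case via $\ambiguityset^\alpha_{s,a}$. The only caveat, which the paper itself flags, is that the base case and the upper bound require care when $R$ and $O$ do not align with the partition; your assumption of alignment is consistent with the paper's treatment.
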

The proof follows similarly to \cite[Theorem 4]{laurenti2023unifying} by induction on Eqn. \eqref{Eq:RobustValueIteration} and relying on Theorem \ref{thm:divide_and_conquer_algorithm_sound} to guarantee the soundness of our approach at any time step.
Note that if the regions $R$ and $O$ do not align with the partitioning, then extra care is needed for computing the upper bound on satisfaction \cite[Section 7]{cauchi2019efficiency}.

\section{Experiments}\label{sec:experiments}
To show the efficacy of our approach, we conducted a wide range of empirical studies with benchmarks ranging from linear 2D systems to nonlinear systems, linear 7D systems, \acrfullpl{nndm}, and Gaussian mixture models. A summary of the benchmarks can be found in Table \ref{tab:benchmark_description} and extended descriptions are provided in Appendix \ref{app:experiment_details}.
All benchmarks are verified for a time horizon $H = 10$ unless otherwise specified. 
We compare our results against state-of-the-art tools for abstractions to \glspl{imdp} and \glspl{mdp}. In particular, for \glspl{imdp}, we compare with the method in \cite[Theorem 1]{adams2022formal}, as well as IMPaCT \cite{10.1007/978-3-031-68416-6_15}, which uses nonlinear optimization and Monte Carlo integration to compute the transition probabilities needed to compute the abstraction as illustrated in Section \ref{sec:interval_markov_decision_processes}. 
Furthermore, we also compare with SySCoRe \cite{van2023syscore}, which is based on model reduction and abstractions to \glspl{mdp}.
To run value iteration for both the proposed method and the method of \cite{adams2022formal}, we use \texttt{IntervalMDP.jl} \cite{mathiesen2024intervalmdpjlacceleratedvalueiteration}, which is our package for parallelized value iteration over \glspl{imdp} and \glspl{odimdp} written in Julia \cite{bezanson2012julia}.
The experiments are all run on the TU Delft supercomputer, DelftBlue \cite{DHPC2024}, on one memory partition node with an Intel Xeon E5-6248R CPU (12 cores allocated) and 200 GB memory.
We stress that despite recent advances in performing value iteration on GPUs \cite{mathiesen2024intervalmdpjlacceleratedvalueiteration, 10.1007/978-3-031-68416-6_15}, the experiments are all conducted on CPUs.

\subsection{Comparison with IMDP-based approaches}\label{sec:experiments_imdp_comparison}
To compare abstraction approaches based on \glspl{odimdp} and \glspl{imdp}, for each method, %
 we report abstraction and certification time and memory usage. 
Furthermore, to quantify the conservatism of the results, we consider the following metrics: (i) the mean lower bound $\mathrm{mean}(\{V^\pi_H(s): s \in S \setminus T\})$ and (ii) the mean error, $\varepsilon = \mathrm{mean}(\{{\hat{V}}^\pi_H(s) - V^\pi_H(s): s \in S \setminus T\})$ where $T$ is the set of terminal states and $\pi$ is the optimal strategy for Eqn. \eqref{Eq:DecoupledRobustValueIteration}. ${\hat{V}}^\pi_H$ and  $V^\pi_H$ are the lower and upper bounds on the satisfaction probability returned by our approach. That is, the solution of Eqn. \eqref{eq:goal} using the approach in Eqn. \eqref{Eq:DecoupledRobustValueIteration}.
Furthermore, we report the (aggregated) difference in $V$ between the methods as $\mathrm{agg}(\{V_H^{\mathrm{odIMDP}}(s) - V_H^{\mathrm{other}}(s) : s\in S \setminus T\})$ where the aggregation functions are $\mathrm{agg} \in \{\min, \max, \mathrm{mean}\}$ and the methods are $\mathrm{other} \in \{\mathrm{IMDP}, \mathrm{IMPaCT}\}$ ($\mathrm{IMDP}$ refers to the method in \cite[Theorem 1]{adams2022formal}).
We consider the same discretization size in the abstraction process for all methods, reported in Table \ref{tab:benchmark_description}.

The results of the comparison with \gls{imdp}-based approaches can be found in Table \ref{tab:performance_results} and Table \ref{tab:satisfaction_results} for computation performance and satisfaction probability, respectively.
Comparing the computation time of \cite[Theorem 1]{adams2022formal} and IMPaCT \cite{10.1007/978-3-031-68416-6_15} with that of our method based on \glspl{odimdp}, the abstraction time of \glspl{odimdp} is at least 5x faster, up to 80x faster, for 2D benchmarks, and at least two orders of magnitude faster for >3D benchmarks.
The certification time is comparable to or faster than both \cite[Theorem 1]{adams2022formal} and IMPaCT.
As expected, the memory requirements are considerably lower for \glspl{odimdp}: Compared to IMPaCT, \glspl{odimdp} use at least an order of magnitude less memory. Most remarkably, for the 6D and 7D linear models, \glspl{odimdp} uses 4682x and 30476x less memory, respectively, compared to IMPaCT. If we compare the satisfaction probability results in Table \ref{tab:satisfaction_results}, the proposed method is at least as good as the method \cite[Theorem 1]{adams2022formal} for all states in all examples. %
This ranges from almost the same satisfaction probability for the 2D robot with a reach-avoid specification to $17.33$ percentage points better on average for the 4D building automation system.
A similar analysis holds for IMPaCT, with the caveat that IMPACT uses non-linear optimization to compute Eqn. \eqref{Eqn:AmbiguitySetIMDP}, thus its results are generally tighter compared to those of \cite[Theorem 1]{adams2022formal}.

\begin{table*}
    \centering
    \caption{Computation time and memory requirements for \gls{imdp}-based approaches across different benchmarks. Time is measured in seconds, and memory is measured in MB. OOM denotes out of memory. To compare the memory requirements for the benchmarks with OOM, we have calculated the amount of memory required to store the transition interval bounds (as a dense matrix) using the formula for \glspl{imdp} in Subsection \ref{subsec:space_complexity_analysis}.}
    {\small\begin{tabular}{l|rrr|rrr|rrr}
         \toprule
                   & \multicolumn{3}{l|}{Our method} & \multicolumn{3}{l|}{\citeauthor{adams2022formal} \cite{adams2022formal}} & \multicolumn{3}{l}{IMPaCT \cite{10.1007/978-3-031-68416-6_15}} \\
         Benchmark & Abs. time & Cert. time & Mem. & Abs. time & Cert. time & Mem. & Abs. time & Cert. time & Mem. \\\midrule
         Car parking & 0.150 & 0.146	& 19.2 & 13.497 & 0.255 & 138.1 & 19.570 & 0.846 & 304.9 \\
         Robot reachability & 0.665 & 0.168 & 21.6 & 12.659 & 0.129 & 88.0 & 20.611 & 0.765 & 306.7 \\
         Robot reach-avoid & 14.259 & 7.641 & 547.2 & 1136.720 & 6.739 & 4143.8 & 918.856 & 37.526 & 16388.0 \\
         Building automation system & 0.023 & 0.237 & 3.1 & 7.273 & 0.285 & 105.1 & 17.564 & 0.318 & 96.0 \\
         Van der Pol & 1.658 & 0.257 & 45.5 & 27.255 & 0.827 & 353.6 & 113.235 & 3.233 & 1093.4 \\
         NNDM Cartpole & 236.154 & 550.747 & 610.8 & 42326.400 & 3.751 & 1590.9 & \multicolumn{3}{c}{Incompatible dynamics}\\
         6D linear model & 8.959 & 359.887 & 237.3 & \multicolumn{3}{c|}{OOM} & \multicolumn{3}{c}{OOM ($\approx 1.1TB$)}\\
         7D linear model & 66.231 & 13903.540 & 2310.8 & \multicolumn{3}{c|}{Timeout (24h)} & \multicolumn{3}{c}{OOM ($\approx 70.4TB$)}\\
         Dubin's car GP & 13.816 & 19.562 & 352.2 & 336.940 & 31.333 & 14265.8 & \multicolumn{3}{c}{Incompatible dynamics}\\
         Stochastic switched linear & 0.033 & 0.045 & 4.5 & 3.391 & 0.038 & 41.0 & \multicolumn{3}{c}{NLopt failure} \\
         \bottomrule
    \end{tabular}}
    \label{tab:performance_results}
\end{table*}

\begin{table*}
    \centering
    \caption{Satisfaction probabilities for \gls{imdp}-based approaches. $V$ denotes the lower bound satisfaction probability and $\epsilon$ the mean error. For \citeauthor{adams2022formal} \cite{adams2022formal} and IMPaCT \cite{10.1007/978-3-031-68416-6_15}, we also report the (minimum, maximum, and mean over regions) difference $\delta$ in $V$ to our method, where positive means that our method yields a higher satisfaction probability and vice versa. The unreliable result for IMPaCT with the Van der Pol benchmark is due to it returning \texttt{NaN} for every state.}
    {\small\begin{tabular}{l|rr|rrrrr|rrrrr}
         \toprule
         & \multicolumn{2}{l|}{Our method} & \multicolumn{5}{l|}{\citeauthor{adams2022formal} \cite{adams2022formal}} & \multicolumn{5}{l}{IMPaCT \cite{10.1007/978-3-031-68416-6_15}} \\
         Benchmark & Mean $V$ & $\varepsilon$ & Mean $V$ & $\varepsilon$ & Min $\delta$ & Max $\delta$ & Mean $\delta$ & Mean $V$ & $\varepsilon$ & Min $\delta$ & Max $\delta$ & Mean $\delta$ \\\midrule
         Car parking & 0.269 & 0.3885 & 0.213 & 0.5315 & 0.0040 & 0.1428 & 0.0560 & 0.213 & 0.5183 & 0.0040 & 0.1422 & 0.0556 \\
         Robot reachability & 0.889 & 0.1108 & 0.881 & 0.1186 & 0.0060 & 0.0119 & 0.0079 & 0.890 & 0.1098 & -0.0058 & 0.0022 & -0.0010 \\
         Robot reach-avoid & 0.980 & 0.0199 & 0.979 & 0.0208 & 0.0005 & 0.0027 & 0.0008 & 0.980 & 0.0202 & -0.0001 & 0.0018 & 0.0003 \\
         Building automation system & 0.263 & 0.7336 & 0.090 & 0.9076 & 0.0510 & 0.2297 & 0.1733 & 0.174 & 0.8237 & 0.0304 & 0.1131 & 0.0897 \\
         Van der Pol & 0.069 & 0.3367 & 0.051 & 0.4178 & 0.0000 & 0.0529 & 0.0177 & \multicolumn{5}{c}{Unreliable results} \\
         NNDM Cartpole & 0.004 & 0.7634 & 0.000 & 0.7184 & 0.0000 & 0.4101 & 0.0037 & \multicolumn{5}{c}{Incompatible dynamics}\\
         6D linear model & 0.958 & 0.0419 & \multicolumn{5}{c|}{OOM} & \multicolumn{5}{c}{OOM}\\
         7D linear model & 0.952 & 0.0483 & \multicolumn{5}{c|}{Timeout} & \multicolumn{5}{c}{OOM}\\
         Dubin's car GP & 0.362 & 0.3461 & 0.216 & 0.5046 & 0.0000 & 0.8383 & 0.1458 & \multicolumn{5}{c}{Incompatible dynamics}\\
         Stochastic switched linear & 0.411 & 0.2828 & 0.366& 0.3605 & 0.0000 & 0.0979 & 0.0456 & \multicolumn{5}{c}{NLopt failure} \\
         \bottomrule
    \end{tabular}}
    \label{tab:satisfaction_results}
\end{table*}

\subsection{Comparison with SySCoRe (MDP-based approach)}
We now also compare with SySCoRe\footnote{We only compare with the finite-state abstraction of SySCoRe, i.e. without model order reductions, to assess the impact of the model choice. As mentioned in Section \ref{sec:introdution}, model order reductions are also interesting in the context of \gls{imdp}-based abstractions and is subject to future work.} \cite{van2023syscore}, which represents a state-of-the-art tool for control and verification of stochastic systems based on abstractions to \glspl{mdp}.
Note that, as SySCoRe only supports gridding of 2D systems, we cannot compare on systems of higher dimension. In particular, in Fig. \ref{fig:comparison_with_mdp}, we consider the car parking benchmark introduced in Table \ref{tab:benchmark_description}, and for various sizes of the abstraction, we report the error $\varepsilon$, as introduced in the previous subsection. From the figure, we have three key observations:
\begin{itemize}
    \item SySCoRe requires a relatively large number of regions in the partition before its mean error decreases with larger time horizons.
    \item \glspl{odimdp} tends to obtain substantially tighter bounds for the same abstration size. For example, the mean error at time $100$ of SySCoRe with an \gls{mdp} of size $90k$ is larger than that of our approach with an \glspl{odimdp} of $2.5k$ states.
    \item The mean error for \glspl{odimdp} first increases with time horizons to a level slightly above SySCoRe with the same number of regions and then decreases for longer horizons.
\end{itemize}
We should, however, stress that SySCoRe requires significantly less memory compared to our approach for the same partition size. The memory usage compared to the mean error at convergence for both SySCoRe and \glspl{odimdp} is reported in Table \ref{tab:syscore_mem_comparison}. Nevertheless, especially for longer time horizon, because of the fewer regions per dimension required by our approach to achieve a similar level of error, we expect our approach based on \glspl{odimdp} to be substantially more scalable for higher dimensions. In fact, this has already been observed in \cite{cauchi2019efficiency} for \gls{imdp}-based approaches compared to \gls{mdp}-based approaches and, as illustrated in Table \ref{tab:performance_results}, our approach substantially outperforms that of \cite{cauchi2019efficiency} in terms of scalability. 

\begin{figure}
    \centering
    \includegraphics[width=\linewidth]{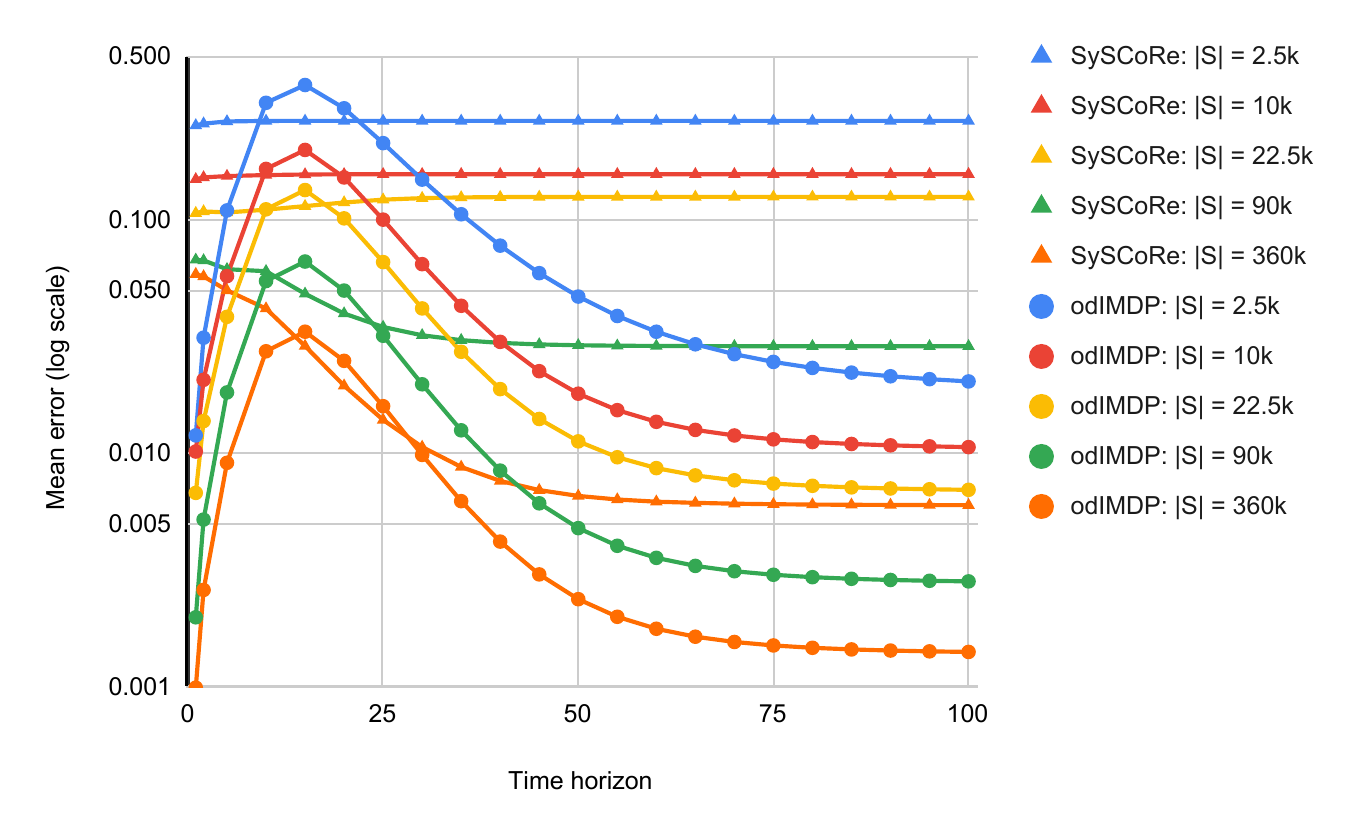}
    \caption{Comparing $\varepsilon$ for a varying number of regions on the abstraction (assuming a uniform partition of the state space) and time horizons for both SySCoRe \cite{van2023syscore} and \glspl{odimdp}.}
    \label{fig:comparison_with_mdp}
\end{figure}

\begin{table}
    \centering
    \caption{The memory usage and the mean error $\varepsilon$ at 100 time steps for SySCoRe \cite{van2023syscore} and \glspl{odimdp} on the car parking benchmark (see Table \ref{tab:benchmark_description}).}
    \label{tab:syscore_mem_comparison}
    \begin{tabular}{l|cc|cc}
        \toprule
        \# regions & \multicolumn{2}{l|}{SySCoRe} & \multicolumn{2}{l}{\glspl{odimdp}} \\
        \cmidrule{2-5}
        & Mem. (MB) & $\varepsilon$ & Mem. (MB) & $\varepsilon$ \\
        \midrule
        2500 & 0.80 & 0.2650 & 35.55 & 0.0204 \\
        10000 & 3.20 & 0.1568 & 279.51 & 0.0107 \\
        22500 & 7.20 & 0.1254 & 937.86 & 0.0070 \\
        90000 & 28.81 & 0.0288 & 7459.02 & 0.0028 \\
        360000 & 115.21 & 0.0060 & 59499.08 & 0.0014 \\
        \bottomrule
    \end{tabular}
\end{table}

\subsection{Empirical convergence analysis}
As hinted in Fig. \ref{fig:comparison_with_mdp}, the error of our approach is reduced with finer partitioning of the region of interest $X$. 
To analyze this behavior in more detail, in Fig. \ref{fig:convergence_plot} we report the mean error and 95\% confidence interval as a function of the number of regions per axis for the car parking and building automation system benchmarks.
In both cases, the mean error decreases with more regions for both systems. The slight increase at various points for the car parking benchmark is attributed to a misalignment of the reach and avoid regions with the partitioning of $X$ (see Section \ref{subsec:correctness} for more details). 
The confidence interval for the car parking benchmark suggests that while a larger number of regions quickly reach low errors, some hardly converge. Future work will study the convergence from a theoretical perspective.

\begin{figure}
    \centering
    \includegraphics[width=\linewidth]{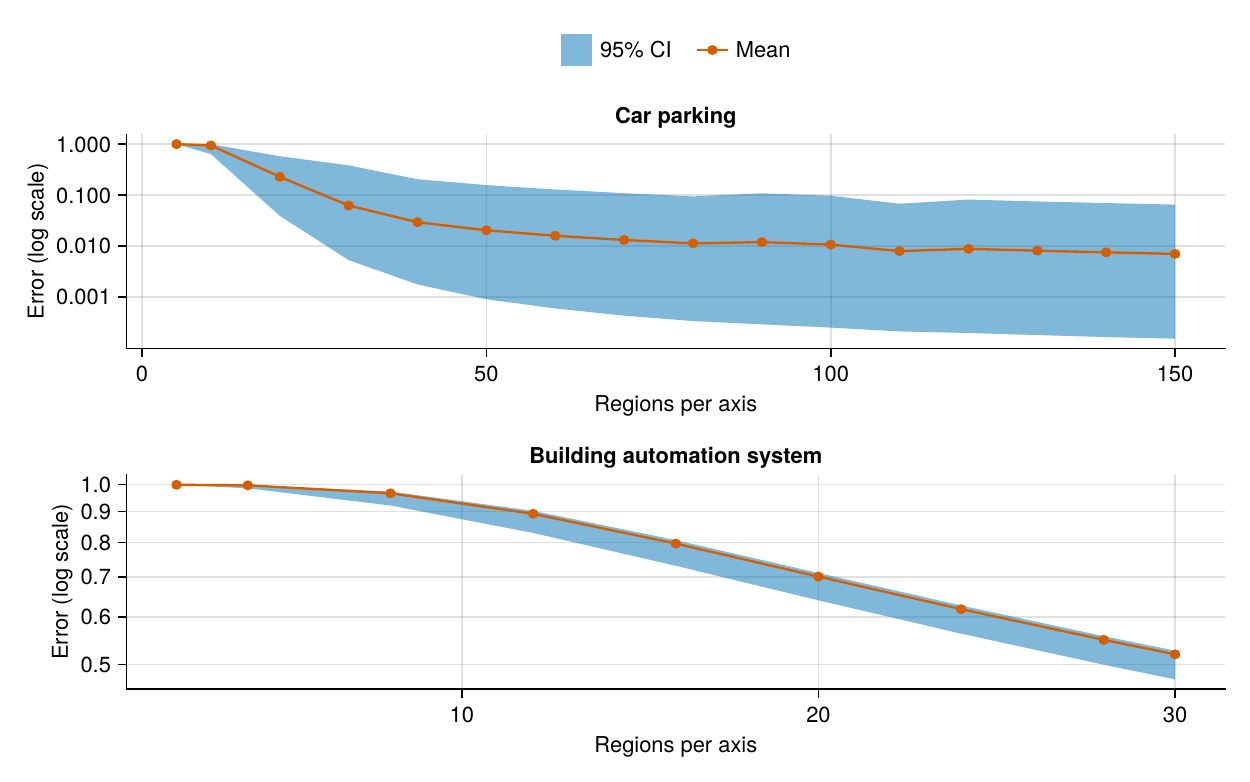}
    \caption{Mean error and 95$\%$ confidence interval, with respect to a uniform distribution of initial conditions, for various sizes of partitions for the car parking and building automation system benchmarks.}
    \label{fig:convergence_plot}
\end{figure}

\section{Conclusion}\label{sec:conclusion}
We presented a novel approach for control synthesis of non-linear stochastic systems against probabilistic reach-avoid specifications. Our approach is based on abstractions to \acrfull{odimdp}, which is a new class of robust Markov models with uncertain transition probabilities having the product form. We showed how such a structure on the transition probabilities allows one to abstract a large class of stochastic systems by obtaining substantial improvements both in terms of memory requirements and tightness of the results compared to state-of-the-art. The theoretical findings are supported by experimental results, showing how our new approach can successfully outperform existing competing methods in various tasks.  %

Naturally, our proposed approach is not without limitations.
First, our framework only applies to systems with independence in the stochasticity across dimensions. As discussed, this includes various systems commonly used in practice. However, if there is dependence across the marginals, then the transition probabilities do not have the product form. Consequently, our framework cannot be applied. How to extend our approach to this setting is an important future direction.
Secondly, exactly and efficiently solving the multi-linear problem of Eqn. \eqref{Eq:MultilinearRobustValueIteration} is still an open problem for which we employed relaxations.
Finally, our approach assumes on a grid-partitioning of the region of interest, although the literature \cite{adams2022formal, reed2023promises} on refinement has shown that heterogeneous abstractions can be beneficial. 

\bibliographystyle{ACM-Reference-Format}
\bibliography{bib.bib}             %
                      
\clearpage
\appendix

\section{Experiment details}\label{app:experiment_details}

\subsection{Other tools for verification of stochastic systems}
We have explored comparing against SReachTools \cite{SReachTools}, however, this tool supports a different class of problems.
Namely, that \emph{given an initial region}, find a reachability tube with high probability, or given a target tube, find a set of initial states with a minimum probability of satisfaction.
We have also tested StocHy \cite{cauchi2019stochyautomatedverificationsynthesis}, but the tool crashed in 3 of its 4 case studies, suggesting that the tool is not maintained and thus we decided against comparing with StocHy.

\subsection{Benchmarks}
\paragraph*{Car parking}
A 2D case study from \cite{van2023syscore} of parking a car with dynamics $x_{k + 1} = 0.9x_k + 0.7u_k + v_k$ with $v_k \sim \mathcal{N}(0, I)$. The region of interest is $X = [-10, 10]^2$ and the input space $U = \{-1, 0, 1\}^2$. 
The reach region is $R = [4, 10] \times [-4, 0]$ and the avoid region is $O = [4, 10] \times [0, 4]$. The goal is to maximize the pessimistic reach-avoid probability over 10 time steps.
We consider a partition of the  region of interest of $|\tilde{S}_1| \times |\tilde{S}_2| = (40, 40)$.

\paragraph*{2D robot}
A 2D case study from \cite{10.1007/978-3-031-68416-6_15} of a planar robot with nonlinear input \[
    x_{k + 1} = x_k + 10u_{k,1} \begin{pmatrix}
        \cos(u_{k,2}) \\ \sin(u_{k,2})
    \end{pmatrix} + v_k
\] where $v_k \sim \mathcal{N}(0, 0.75I)$. The region of interest is $X = [-10, 10]^2$ and the action space is $[-1, 1]^2$ uniformly discretized into $L^u$ points.
The benchmark is similar to the SySCoRe running example, except with nonlinear inputs.
For running the benchmark using our methods, we transform the input to $v = \begin{pmatrix}u_1\cos(u_2) & u_1\sin(u_2)\end{pmatrix}^\top$ such that the dynamics are linear in the (nonlinearly partitioned) control inputs.

We consider two different specifications: reachability and reach-avoid, each with a different partitioning. For both specifications, the reach region is $R = [5, 7]^2$ and for the reach-avoid specification, the avoid region is $O = [-2, 2]^2$. For reachability, we use a state partitioning of $|\tilde{S}_1| \times |\tilde{S}_2|=(20, 20)$ and an action partitioning of $|A_1| \times |A_2| = (11, 11)$, while for reach-avoid we use $|\tilde{S}_1| \times |\tilde{S}_2| = (40, 40)$ and $|A_1| \times |A_2| = (21, 21)$.
For both properties, we maximize the pessimistic reach-avoid probability over 10 time steps.

\paragraph*{4D building automation system}
A 4D case study from \cite{10.1007/978-3-031-68416-6_15} of building automation system with the linear dynamics $x_{k+1} = Ax_k + B u_k + v_k$ where $v_k \sim \mathcal{N}(0, \Sigma)$ with \[
\begin{aligned}
A &= \begin{pmatrix}
    0.6682 & 0 &  0.02632 &  0 \\
    0 &  0.683 & 0   &  0.02096\\
    1.0005 & 0&   -0.000499 & 0 \\
    0  & 0.8004 & 0  &    0.1996
\end{pmatrix}, \\
B &= \begin{pmatrix}
    0.1320 \\
    0.1402 \\
    0\\
    0.0
\end{pmatrix}, \text{ and } \\
\Sigma &= \diag(1/12.9199, 1/12.9199, 1/2.5826, 1/3.2276).
\end{aligned}
\]
The region of interest is $X = [18.75, 21.25]^2 \times [29.5, 36.5]^2$ and the input space $U = \{17, 18, 19, 20\}$. 
The safe region is $X$. The goal is to maximize the pessimistic probability of safety in 10 time steps, or by duality, minimize the optimistic reachability probability to the complement of the safe set $X^c$. Then the complement probability of that result is maximized pessimistic safety probability.
We consider a partition of the region of interest of $|\tilde{S}_1| \times |\tilde{S}_2| \times |\tilde{S}_3| \times |\tilde{S}_4| = (5, 5, 7, 7)$.

\paragraph*{Van der Pol Oscillator}
A 2D nonlinear case study from \cite{van2023syscore} with the following dynamics: \[
    x_{k+1} = \begin{pmatrix}
        x_{k,1} + x_{k,2} \tau \\
        x_{k,2} + (-x_{k,1} + (1 - x_{k,1})^2 x_{k,2}) \tau + u_k
    \end{pmatrix} + v_k
\] where $\tau = 0.1$ and $v_k \sim \mathcal{N}(0, 0.2I)$.
The region of interest is $X = [-4, 4]^2$ and the action space is $[-1, 1]$ partitioned into 11 points. The goal is to maximize the pessimistic reachability probability to a reach region $R = [-1.4, -0.7] \times [-2.9, -2.0]$ over 10 time steps.
We consider a partition of the region of interest of $|\tilde{S}_1| \times |\tilde{S}_2| = (50, 50)$.

\paragraph*{\Acrlong{nndm}}
As a complex benchmark with non-smooth dynamics, we consider a \gls{nndm} with dynamics $x_{k+1}= f_\theta(x_k, u_k) + v_k$ where $f_\theta : \mathbb{R}^n \times U \to \mathbb{R}^n$ is a neural network. The specific benchmark is a surrogate model for the classic cartpole model \cite{6313077} where the states are $(x, \dot{x}, \theta, \dot{\theta})$, the region of interest is $X = [-a, a]$ with $a = [2, 34.028235, 0.41887903, 34.028235]$, and the input is $U = \{0, 1\}$ (push the cart left and right, respectively). The noise is $v_k \sim \mathcal{N}(0, 0.1I)$.

The model is a neural network with two hidden layers of 256 neurons and ReLU activation function (no activation function on the output). The input to the network is $6$-dimensional since the two discrete actions are one-hot encoded. The model is trained for 10,000 iterations on batches of 100 samples where the samples are obtained by uniformly sampling from $X$. The ADAM optimization algorithm is used with a learning rate of $1\mathrm{e}{-3}$, exponential decay with a multiplicative factor of $0.999$, and the standard mean-squared error is used as the loss function. For reachability analysis of the neural network, CROWN \cite{10.5555/3327345.3327402, 10.5555/3495724.3495820} is used for each region and input. We use $|\tilde{S}_1| \times |\tilde{S}_2| \times |\tilde{S}_3| \times |\tilde{S}_4| = (20, 20, 24, 20)$.

\paragraph*{$n$-D linear model}
To show scalability, we include a linear model with a configurable dimension $n$. The dynamics are $x_{k+1} = Ax_k + v_k$ where $A$ is a Toeplitz matrix
\[A = \underbrace{\begin{pmatrix}
    0.7 & -0.1 & \cdots & 0\\
    0 & 0.7 & \ddots & \vdots \\
    \vdots & \ddots & 0.7 & -0.1 \\
    -0.1 & \cdots & 0 & 0.7
\end{pmatrix}}_n\]
and $v_k \sim \mathcal{N}(0, 0.1I)$. The safe set is $X = [-1, 1]^n$ and the goal is to stay within $X$ for 10 time steps.
We use a partitioning of $\prod_{i=1}^n |\tilde{S}_i| = (\underbrace{8, \ldots, 8}_n)$. Experiments are conducted with $n = 6$ and $n = 7$.

\paragraph*{Dubin's car Gaussian Process}
A 3D case study from \cite{reed2023promises} where the dynamics of each mode (input), of 7, are learned as a \acrfull{gp} with Deep Kernel Learning. That is, a \acrlong{gp} $GP(\mu, k_\theta)$ where $\mu : \mathbb{R}^n \to \mathbb{R}$ is a mean function and $k_\theta : \mathbb{R}^n \times \mathbb{R}^n \to \mathbb{R}_{\geq 0}$ is a deep kernel, with $k_\theta(x, x') = k(g_\theta(x), g_\theta(x'))$ for a given base kernel $k : \mathbb{R}^s \times \mathbb{R}^s \to \mathbb{R}_{\geq 0}$ and a neural network $g_\theta : \mathbb{R}^n \to \mathbb{R}^s$.
We refer to \cite{reed2023promises} for details on the deep kernel including the training process and on the number of samples for the posterior. Bounds on the standard deviation and covariance are obtained using the method from \cite{patane2022adversarial}. We remark that while \cite{reed2023promises} introduces a refinement algorithm, we use the \gls{gp} bounds without refinement, as refining the abstraction results in a non-gridded abstraction, which is incompatible with the proposed method. 

The property to synthesize an optimal strategy for is reach-avoid with $O = [4, 6] \times [0, 1] \times [-0.5, 0.5]$ and $R = [8, 10] \times [0, 1] \times [-0.5, 0.5]$ for 10 time steps. 
We use a region of interest $[0, 10] \times [0, 2] \times [-0.5, 0.5]$ and a partitioning $|\tilde{S}_1| \times |\tilde{S}_2| \times |\tilde{S}_3| = (80, 16, 20)$. 

\paragraph*{Stochastic switched linear system}
We have designed a new benchmark for the class of stochastic switched systems \cite{boukas2007stochastic, ZAMANI2015183} to showcase the functionality on Gaussian mixture models. The dynamics are a mixture of two Gaussians whose mean depends on $x$ and the switching between the modes is governed by a Bernoulli random variable $Z$ with $\probmeas(Z = 0) = 0.7$, \[
    x_{k + 1} \sim \underbrace{\probmeas(Z = 0)}_{\alpha_1} p_1(x_k) + \underbrace{\probmeas(Z = 1)}_{\alpha_2} p_2(x_k)\]
where \[
\begin{aligned}
    p_1(x_k) &= \mathcal{N}\left(\vphantom{\begin{pmatrix}
        0.1 & 0.9\\
        0.8 & 0.2
    \end{pmatrix}x_k,\, \begin{pmatrix}
        0.3^2 & 0\\
        0 & 0.2^2
    \end{pmatrix}}\right.\underbrace{\begin{pmatrix}
        0.1 & 0.9\\
        0.8 & 0.2
    \end{pmatrix}x_k}_{\mu^1(x_k)},\, \underbrace{\begin{pmatrix}
        0.3^2 & 0\\
        0 & 0.2^2
    \end{pmatrix}}_{\Sigma^1}\left.\vphantom{\begin{pmatrix}
        0.1 & 0.9\\
        0.8 & 0.2
    \end{pmatrix}x_k,\, \begin{pmatrix}
        0.3^2 & 0\\
        0 & 0.2^2
    \end{pmatrix}}\right)\\
    p_2(x_k) &= 
    \mathcal{N}\left(\vphantom{\begin{pmatrix}
        0.8 & 0.2\\
        0.1 & 0.9
    \end{pmatrix}x_k,\, \begin{pmatrix}
        0.2^2 & 0\\
        0 & 0.1^2
    \end{pmatrix}}\right.\underbrace{\begin{pmatrix}
        0.8 & 0.2\\
        0.1 & 0.9
    \end{pmatrix}x_k}_{\mu^2(x_k)},\, \underbrace{\begin{pmatrix}
        0.2^2 & 0\\
        0 & 0.1^2
    \end{pmatrix}}_{\Sigma^2}\left.\vphantom{\begin{pmatrix}
        0.8 & 0.2\\
        0.1 & 0.9
    \end{pmatrix}x_k,\, \begin{pmatrix}
        0.2^2 & 0\\
        0 & 0.1^2
    \end{pmatrix}}\right). 
\end{aligned}
\]
The specification asserted is a reach-avoid property with the reach region $R = [1, 2] \times [0, 1]$ and avoid region $O = [-1, 0] \times [-1, 1]$ over $10$ time steps. We use $|\tilde{S}_1| \times |\tilde{S}_2| = (40, 40)$.

\section{Value iteration over odIMDPs vs IMDPs}
Proposition \ref{prop:divide_and_conquer_algorithm_distribution_containment} proves that the divide-and-conquer algorithm in Eqn. \eqref{Eq:DecoupledRobustValueIteration} is at most as conservative as Eqn. \eqref{Eq:RobustValueIteration} over the \gls{imdp} constructed by multiplying the marginal interval bounds. This might however still be unintuitive, hence we show in this section Eqn. \eqref{Eq:DecoupledRobustValueIteration} applied to the example in Fig. \ref{fig:example_margin_joint_ambiguity_sets} against value iteration on the corresponding \gls{imdp}. The result is displayed in Fig. \ref{fig:example_product_partial_structure_vi} where Eqn. \eqref{Eq:DecoupledRobustValueIteration} on the \gls{odimdp} yields a value of 2.16 to a value of 2.06 on the Eqn. \eqref{Eq:RobustValueIteration} on the \gls{imdp}.
That is, a larger and thus tighter lower bound for the \gls{odimdp}.

\begin{figure}
    \centering
    \includegraphics[width=0.8\linewidth, page=2]{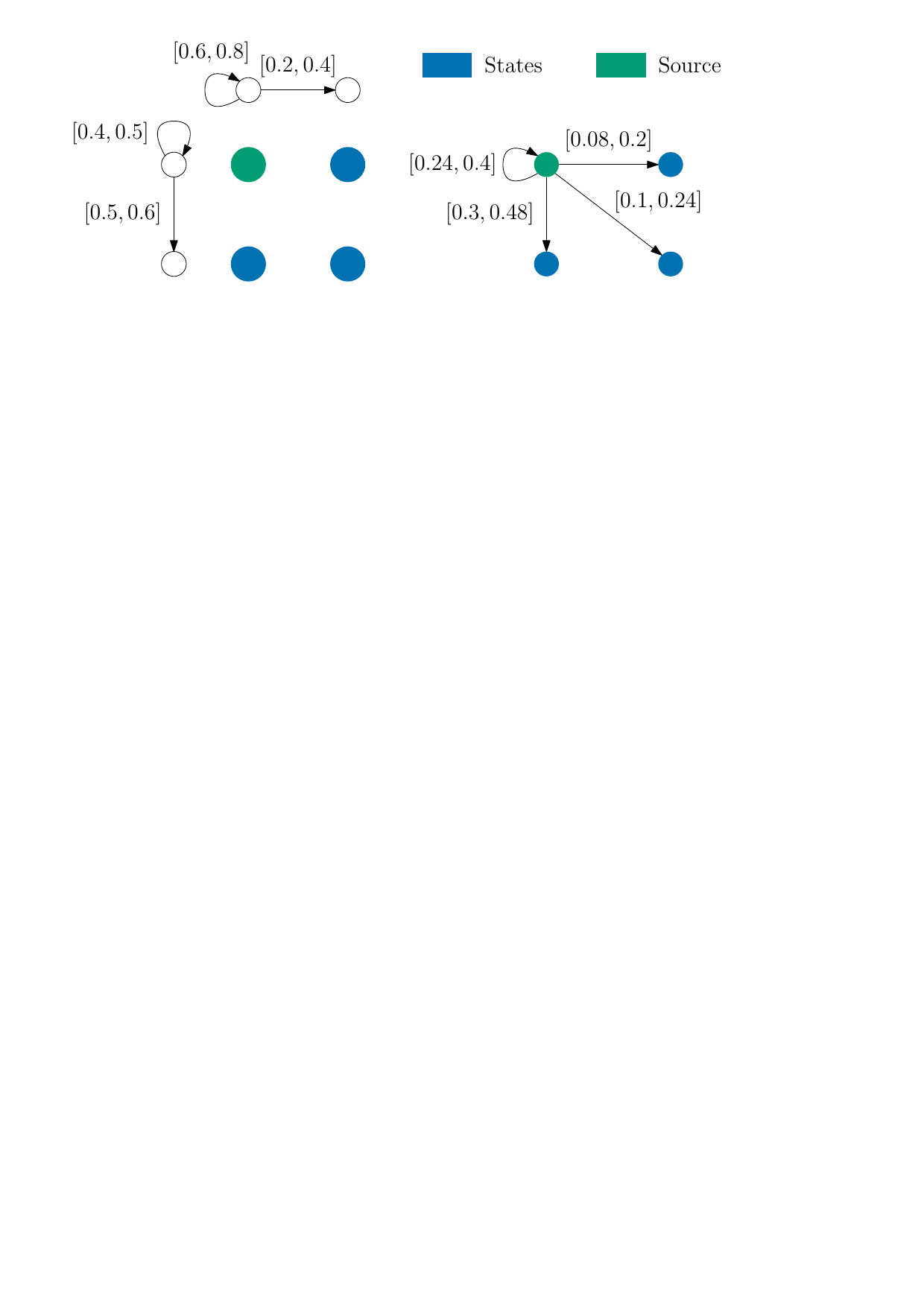}
    \caption{One (pessimistic) Bellman update on the example in Fig. \ref{fig:example_margin_joint_ambiguity_sets} with an \gls{odimdp} on the left and an \gls{imdp} on the right constructed by multiplying the marginal interval bounds of the \gls{odimdp}. The values immediately below the \gls{odimdp} and \gls{imdp} is the value function at the previous step $V_{k-1}$.
    For the \glspl{odimdp}, the Bellman update using Eqn. \eqref{Eq:DecoupledRobustValueIteration} is computed by recursively using O-maximization \cite{givan2000bounded, lahijanian2015formal} with each red box outlining an O-maximization step.
    For the \gls{imdp}, the Bellman update using Eqn. \eqref{Eq:RobustValueIteration} is computed using a one-shot O-maximization step.
    The result is that the \gls{odimdp} and Eqn. \eqref{Eq:DecoupledRobustValueIteration} together yields a tighter lower bound.}
    \label{fig:example_product_partial_structure_vi}
\end{figure}

\end{document}